\documentclass[preprint]{imsart}

\RequirePackage[OT1]{fontenc}
\RequirePackage{amsthm,amsmath}
\RequirePackage[numbers]{natbib}
\RequirePackage[colorlinks,citecolor=blue,urlcolor=blue]{hyperref}
\usepackage{amsmath,amssymb,amsfonts,amsthm,enumerate,natbib,color,ifthen,hyperref}
\usepackage{xargs}
\usepackage[textwidth=4cm, textsize=footnotesize]{todonotes}
\setlength{\marginparwidth}{4cm}    
\usepackage[latin1]{inputenc}

\usepackage{float}
\usepackage[caption = false]{subfig}
\usepackage{graphicx}

\usepackage{aliascnt,bbm}
\textwidth 5.6in


\def\rset{\mathbb R}
\def\zset{\mathbb Z}

\def\eqsp{\;}

\newcommand{\pscal}[2]{\left\langle#1,#2\right\rangle}

\newcommand{\eqdef}{\ensuremath{\stackrel{\mathrm{def}}{=}}}

\def\Xset{\mathcal{X}} 
\def\B{\mathcal{B}} 
\def\cB{\mathsf{B}} 
\def\e{\mathcal{E}}

\def\D{\mathcal{D}}
\def\A{\mathcal{A}}

\newcommandx\sequence[3][2=t,3=\zset]
{\ifthenelse{\equal{#3}{}}{\ensuremath{\{ #1_{#2}\}}}{\ensuremath{\{ #1_{#2}, \eqsp #2 \in #3 \}}}}



\def\PP{\mathbb{P}} 
\newcommand{\CPP}[3][]
{\ifthenelse{\equal{#1}{}}{{\mathbb P}\left(\left. #2 \, \right| #3 \right)}{{\mathbb P}_{#1}\left(\left. #2 \, \right | #3 \right)}}
\def\PE{\mathbb{E}} 
\newcommand{\CPE}[3][]
{\ifthenelse{\equal{#1}{}}{{\mathbb E}\left[\left. #2 \, \right| #3 \right]}{{\mathbb E}_{#1}\left[\left. #2 \, \right | #3 \right]}}



\def\tv{\mathrm{tv}}

\def\Cset{\mathcal{C}} 

\def\I{\textsf{I}}

\usepackage{bm}

%

\theoremstyle{plain}
\newtheorem{theorem}{Theorem}
\newtheorem{assumption}{H\hspace{-3pt}}

\newaliascnt{proposition}{theorem}

\aliascntresetthe{proposition}

\newaliascnt{lemma}{theorem}
\newtheorem{lemma}[lemma]{Lemma}
\aliascntresetthe{lemma}
\newaliascnt{corollary}{theorem}

\aliascntresetthe{corollary}

\theoremstyle{definition}
\newaliascnt{definition}{theorem}

\aliascntresetthe{definition}

\newtheorem{algorithm}{Algorithm}

\newaliascnt{remark}{theorem}
\newtheorem{remark}[remark]{Remark}
\aliascntresetthe{remark}

\newaliascnt{example}{theorem}

\aliascntresetthe{example}


\def\rmd{\mathrm{d}}

\def\1{\mathbbm{1}}




\startlocaldefs
\numberwithin{equation}{section}
\theoremstyle{plain}

\endlocaldefs

\begin{document}

\begin{frontmatter}
\title{Approximate spectral gaps for Markov chains mixing times in high dimensions\thanksref{T1}}
\runtitle{Approximate spectral gaps for Markov chains}
\thankstext{T1}{This work is partially supported by the NSF grant DMS1513040.}

\begin{aug}
\author{\fnms{Yves} \snm{Atchad\'e}\ead[label=e1]{atchade@bu.edu}}


\affiliation{Boston University\thanksmark{m1}}

\address{111 Cummington Mall, Boston, 02215, MA, United States\\
\printead{e1}
}
\end{aug}

\begin{abstract}
This paper introduces a concept of approximate spectral gap to analyze the mixing time of Markov Chain Monte Carlo (MCMC) algorithms for which the usual spectral gap is degenerate or almost degenerate.  We use the idea to analyze a class of MCMC algorithms to sample from mixtures of densities.  As an application we study the mixing time of a Gibbs sampler for variable selection in linear regression models. Under some regularity conditions on the signal and the design matrix of the regression problem, we show that for well-chosen  initial distributions the mixing time of the Gibbs sampler is polynomial in the dimension of the space. 
\end{abstract}

\begin{keyword}[class=MSC]
\kwd[Primary ]{60J05}
\kwd{65C05}
\kwd[; secondary ]{65C60}
\end{keyword}

\begin{keyword}
\kwd{Markov Chain Monte Carlo algorithms}
\kwd{Markov chains Mixing times}
\kwd{Spectral gap}
\kwd{MCMC for mixtures of densities}
\kwd{High-dimensional linear regression models}
\end{keyword}

\end{frontmatter}

\section{Introduction}\label{sec:intro}
Understanding the type of problems for which fast Markov Chain Monte Carlo (MCMC) sampling is possible is a question of fundamental interest. The study of the size of the spectral gap is a widely used approach to gain insight into the behavior of MCMC algorithms. However this technique may be inapropriate when dealing with distributions with small isolated local modes. 
To be more precise, let $\pi$ be some probability measure of interest on some measure space $\Xset$, and let  $K$ be a Markov kernel with invariant distribution $\pi$. For the purpose of sampling from $\pi$ using $K$, one can represent an isolated local mode (to which $K$ is sensitive) as a subset $A$ such that $K(x,\Xset\setminus A)$ is small compared to $\pi(\Xset\setminus A)$ for all $x\in A$.  In this  case, $K$  will have a small conductance (see (\ref{PhiK}) for definition), and hence a small spectral gap.  Note however that if $\pi(A)$ is also small (that is we are dealing with a small isolated mode $A$), then, since
\[\int_{\Xset\setminus A}\pi(\rmd x)K(x,A) = \int_{A}\pi(\rmd x)K(x,\Xset\setminus A),\]
 we see that the set $A$ will be typically hard to reach in the first place. Hence, any finite Markov chain $\{X_0,\ldots,X_n\}$ say, with transition kernel $K$ and initialized in $\Xset\setminus A$ is unlikely to visit $A$. Yet,  for $n$ large, $X_n$ may still be a good approximate sample from $\pi$,  since $\pi(A)$ is small. This implies that the mixing time predicted by the standard spectral  gap may markly differ from the practical behavior of these finite chains.  Motivated by this problem, and building on the $s$-conductance of L. Lovasz and M. Simonovits (\cite{lovacz:simonovits:90,lovasz:simonovits93}), we develop an idea of approximate spectral gap (that we call $\zeta$-spectral gap, for some $\zeta\in [0,1)$) which allows us to measure the mixing time of a Markov chain while discounting the ill-effect of overly small (and potentially problematic) sets.  

Mixtures are good examples of probability distributions with isolated local modes. We use the idea to analyze a class of MCMC algorithms to sample from mixtures of densities.  Much is known on the computational complexity of various MCMC algorithms for log-concave densities (see e.g. \cite{lovacz:simonovits:90,lovasz:simonovits93,friezeetal94,lovasz99,lovasz:vempala07,dwivedi:etal:18} and the references therein). However these results cannot be directly applied to mixtures, since  a mixture of log-concave densities is not log-concave in general.  By augmenting the variable of interest to include the mixing variable, a Gibbs sampler  can be used to sample from a mixture. A very nice lower bound on the spectral gap of such Gibbs samplers (and generalizations thereof) is developed in \cite{madras:randall:02}. However the analysis of \cite{madras:randall:02} typically leads to  mixing times that grow exponentially fast with the dimension of the space. We re-examine \cite{madras:randall:02}'s argument using the concept of $\zeta$-spectral gap, leading to Theorem \ref{thm:mixing} that gives potentially better dependence on the dimension.

Our initial motivation into this work is in large-scale Bayesian variable selection problems. The Bayesian posterior distributions that arise from these problems are typically mixtures of log-concave densities with very large numbers of components, and the aforementioned Gibbs sampler is commonly used  for sampling (see e.g. \cite{george:mcculloch97,narisetti:he:14}). 
We show that the proposed  concept of $\zeta$-spectral gap and Theorem \ref{thm:mixing} can be combined with Bayesian posterior contraction principles to  show that the algorithm -- with a good initialization -- has a mixing time that is polynomial in the number of regressors in the model (see Theorem \ref{thm:mixing:lm}).

The paper is organized as follows. We develop the concept of $\zeta$-spectral gap in Section \ref{sec:approx:sp}. The main result there is Lemma \ref{lem:key}. In Section \ref{sec:mixing} we study the mixing time of mixtures of Markov kernels, and derive (Theorem \ref{thm:mixing})  a generalization of Theorem 1.2 of \cite{madras:randall:02}.  We put these two results together to analysis the linear regression model in Section \ref{sec:ex:linregr}, leading to Theorem \ref{thm:mixing:lm}.  Some numerical simulations are detailed in Section \ref{sec:sim}.

\section{Approximate spectral gaps for Markov chains}\label{sec:approx:sp}
 Let $\pi$ be a probability measure on some Polish space $(\Xset,\B)$ (where $\B$ is its Borel sigma-algebra), equipped with a reference sigma-finite measure denoted $\rmd x$. In the applications that we have in mind, $\Xset$ is the Euclidean space $\rset^p$ equipped with its Lebesgue measure. We assume that $\pi$ is absolutely continuous with respect to $\rmd x$, and we will abuse notation and use $\pi$ to denote both $\pi$ and its density: $\pi(\rmd x) = \pi(x)\rmd x$. We let $L^2(\pi)$ denote the Hilbert space of all real-valued square-integrable (wrt $\pi$) functions on $\Xset$, equipped with the inner $\pscal{f}{g}_\pi\eqdef\int_\Xset f(x)g(x)\pi(\rmd x)$ with associated norm $\|\cdot\|_{2,\pi}$. More generally, for $s\geq 1$, we set $\|f\|_{s,\pi}\eqdef\left(\int_\Xset |f(x)|^s\pi(\rmd x)\right)^{1/s}$. For $s=+\infty$, $\|f\|_{s,\pi}$ is defined as the essential supremum of $|f|$ with respect to $\pi$. If $P$ is a Markov kernel on $\Xset$, and $n\geq 1$ an integer, $P^n$ denotes the $n$-th iterate of $P$, defined recursively as $P^n(x,A) \eqdef \int_\Xset P^{n-1}(x,\rmd z) P(z,A)$, $x\in\Xset$, $A$ measurable. If $f:\;\Xset\to\rset$ is a measurable function, then $Pf:\;\Xset\to\rset$ is the function defined as $Pf(x) \eqdef \int_\Xset P(x,\rmd z)f(z)$, $x\in\Xset$, assuming that the integral is well defined. And if $\mu$ is a probability measure on $\Xset$, then $\mu P$ is the probability on $\Xset$ defined as $\mu P(A) \eqdef\int_\Xset \mu(\rmd z) P(z,A)$, $A\in\B$. The total variation distance  between two probability measures $\mu,\nu$ is defined as
 \[\|\mu-\nu\|_\tv \eqdef 2 \sup_{A \in \B} \left(\mu(A)-\nu(A)\right).\]
  
Let $K$ be a Markov kernel on $\Xset$ that is reversible with respect to $\pi$. That is for all $A,B\in\B$,
\[\int_A\pi(\rmd x)\int_B K(x,\rmd y) =  \int_B\pi(\rmd x)\int_A K(x,\rmd y).\]
We will also assume throughout that $K$ is lazy in the sense that $K(x,\{x\}) \geq \frac{1}{2}$. The concept of spectral gap and the related Poincare's inequalities are  commonly used to quantify Markov chains mixing times. For $f\in L^2(\pi)$, we set $\pi(f)\eqdef \int_\Xset f(x)\pi(\rmd x)$, $\textsf{Var}_\pi(f) \eqdef \|f-\pi(f)\|_{2,\pi}^2$,  and $\e(f,f)\eqdef \frac{1}{2}\int\int (f(y)-f(x))^2\pi(\rmd x) K(x,\rmd y)$.  The spectral gap of $K$ is  then defined as 
\[\textsf{SpecGap}(K) \eqdef \inf\left\{\frac{\e(f,f)}{\textsf{Var}_\pi(f)},\;f\in L^2(\pi),\;\mbox{ s.t. }\; \textsf{Var}_\pi(f)>0\right\}.\]
It is well-known  and easy  to establish (see for instance \cite{montenegro:etal:06}~Corollary 2.15) that if $\pi_0(\rmd x) = f_0(x)\pi(\rmd x)$, and $f_0\in L^2(\pi)$, then
\begin{equation}\label{eq:spec:gap:cv}
\|\pi_0 K^n-\pi\|_\tv^2  \leq \textsf{Var}_\pi(f_0)\left(1-\textsf{SpecGap}(K)\right)^{n}.\end{equation}
Therefore, lower-bounds on the spectral gap can be used to derive upper-bounds on the mixing time of $K$. We refer the reader to (\cite{sinclair:jerrum89,sinclair:90,diaconis:stroock:91,montenegro:etal:06}) for more details, and for various strategies to lower-bound $\textsf{SpecGap}(K)$. In many examples, the conductance of $K$,  defined as  
\begin{equation}\label{PhiK}
\Phi(K) \eqdef \inf\left\{\frac{\int_A\pi(\rmd x) K(x,A^c)}{\pi(A)\pi(A^c)},\;A\in\B:\; 0<\pi(A)<1\right\}, \end{equation}
is easier  to control than the spectral gap. Cheeger's inequality for Markov chains (\cite{lawler:sokal:88,sinclair:jerrum89}) can then be used to translate a lower-bound on $\Phi(K)$ into a lower-bound on the spectral gap:
\begin{equation}\label{cond:sg}
\frac{1}{8}\Phi(K)^2\leq \textsf{SpecGap}(K) \leq \Phi(K).\end{equation}

The concept of $s$-conductance introduced by  L. Lovacz and M. Simonivits (\cite{lovacz:simonovits:90,lovasz:simonovits93}, see also \cite{lovasz:vempala07}) as a generalization of the conductance has proven very useful. For $\zeta\in [0,1/2)$ -- using a definition slightly different from \cite{lovacz:simonovits:90,lovasz:simonovits93} -- we define the $\zeta$-conductance of the Markov kernel $K$  as 
\[\Phi_\zeta(K)\eqdef \inf\left\{\frac{\int_A\pi(\rmd x)K(x,A^c)}{(\pi(A)-\zeta)(\pi(A^c)-\zeta)},\;\;\zeta<\pi(A) < \frac{1}{2}\right\},\]
where the infimum above is taken over measurable subsets of $\Xset$. Note that $\Phi_0(K)= \Phi(K)$. Plainly put, $\Phi_\zeta(K)$ captures the same concept as $\Phi(K)$, except that in $\Phi_\zeta(K)$ we disregard sets that are either too small or too large under $\pi$.  It turns out that $\Phi_\zeta(K)$ still controls the mixing time of $K$ up to an additive constant that depends on $\zeta$ (see \cite{lovasz:simonovits93}~Corollary 1.5). One important drawback of the $\zeta$-conductance is that the arguments that relate $\Phi_\zeta(K)$ to the mixing time of $K$ (Theorem 1.4 of \cite{lovasz:simonovits93}) is rather involved, and this  has limited the scope and the usefulness of  the concept. Furthermore there are many problems where direct bound on the spectral gap instead of the conductance is easier, and yields better results. This is for instance the case in discrete problems where canonical path arguments yields much sharper bounds on the Poincare constant (\cite{diaconis:stroock:91}).  

Motivated by the $\zeta$-conductance,  we introduce a similar concept of $\zeta$-spectral gap that directly approximates the spectral gap. And we show that the proposed $\zeta$-spectral gap still controls the mixing time  of the Markov chains.

Let $\|\cdot\|_\star:\;L^2(\pi)\to [0,\infty]$ denote a norm-like function on $L^2(\pi)$ with the following properties: $\|\alpha f\|_\star =|\alpha| \|f\|_\star$, if $\|f\|_\star =0$ then $\textsf{Var}_\pi(f)=0$, and 
\[\|K f\|_\star \leq \|f\|_\star,\;\;f\in L^2(\pi).\]
For $\zeta\in (0,1/2)$, we define the $\zeta$-spectral gap of $K$ as
\begin{equation}\label{eq:zeta:sg}
\textsf{SpecGap}_\zeta(K) \eqdef \inf\left\{\frac{\e(f,f)}{\textsf{Var}_\pi(f) - \frac{\zeta}{2}},\;\;f\in L^2(\pi),\; \textsf{Var}_\pi(f)> \zeta,\mbox{ and }\|f\|_{\star} = 1\right\}.\end{equation}
We note that $\textsf{SpecGap}_\zeta(K)$ depends on the choice of $\|\cdot\|_\star$, although we will not make that dependence explicit. We note also that if $\zeta=0$ and $\|f\|_\star=\|f\|_{2,\pi}$, then we recover $\textsf{SpecGap}_0(K) = \textsf{SpecGap}(K)$. Furthermore, given $f\in L^2(\pi)$, and writing $\bar f=f-\pi(f)$, we have
\[\frac{\e(f,f)}{\textsf{Var}_\pi(f) - \frac{\zeta}{2}} =\frac{\pi(\bar f^2) - \pscal{\bar f}{P\bar f}_\pi}{\pi(\bar f^2) - \frac{\zeta}{2}}.\]
 By the lazyness of the chain, $\pscal{\bar f}{P\bar f}_\pi\geq \pi(\bar f^2)/2$, and we deduce that $\textsf{SpecGap}_\zeta(K)$ is a quantity that always belongs to the interval $[0,1]$.
 

This idea of $\zeta$-spectral gap is somewhat similar to the concept of weak Poincare inequality developed for continuous-time Markov semigroups with zero spectral gap (\cite{liggett:91,rockner:wang:01,cattiaux:guillin:09}).  One key difference is that weak Poincare inequalities lead to sub-geometric rates of convergence of the semi-group, whereas the idea of $\zeta$-spectral gap as introduced here leads to a geometric convergence rate, plus an additive remainder that depends on $\zeta$. More precisely, we have the following analog of (\ref{eq:spec:gap:cv}). The proof is similar to the proof  of (\ref{eq:spec:gap:cv}), and is based on an argument from \cite{mihail:89}.

\begin{lemma}\label{lem:key}
Fix $\zeta\in (0,1/2)$.  Suppose that $\pi_0(\rmd x) = f_0(x)\pi(\rmd x)$ for a function $f_0\in L^2(\pi)$ such that $\|f_0\|_\star<\infty$. Then for all integer $n\geq 1$, we have
\[\|\pi_0 K^n -\pi\|_\tv^2  \leq  \max\left(\textsf{Var}_\pi(f_0),\zeta\|f_0\|_\star^2\right)\left(1 - \textsf{SpecGap}_\zeta(K)\right)^n + \zeta\|f_0\|_{\star}^2.\]
\end{lemma}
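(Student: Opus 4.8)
The plan is to follow the classical spectral-gap proof of (\ref{eq:spec:gap:cv}) — reduce the total-variation error to a variance, establish a one-step variational recursion, then iterate — while carrying along the two features special to $\textsf{SpecGap}_\zeta(K)$: the normalization by $\|\cdot\|_\star$ and the additive $\zeta$-discount. First I would set $f_n \eqdef K^n f_0$ and $v_n \eqdef \textsf{Var}_\pi(f_n)$. Reversibility of $K$ gives $\pi_0 K^n(\rmd x) = f_n(x)\pi(\rmd x)$, and $\pi K = \pi$ gives $\pi(f_n) = \pi(f_0) = 1$, so $v_n = \|f_n - 1\|_{2,\pi}^2$. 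Since the total-variation distance between $g\pi$ and $\pi$ equals $\|g-1\|_{1,\pi}$, and $\|\cdot\|_{1,\pi} \le \|\cdot\|_{2,\pi}$ under a probability measure, this yields $\|\pi_0 K^n - \pi\|_\tv^2 \le v_n$. Moreover the hypothesis $\|Kf\|_\star \le \|f\|_\star$ gives $\|f_n\|_\star \le \|f_0\|_\star \eqdef M$ for every $n$; one may assume $M > 0$, since $M = 0$ forces $\textsf{Var}_\pi(f_0) = 0$ and the claim is then trivial. It therefore suffices to prove $v_n \le \max(v_0, \zeta M^2)(1 - \textsf{SpecGap}_\zeta(K))^n + \zeta M^2$.

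The heart of the matter is a one-step recursion. With $\bar f_n \eqdef f_n - 1$, reversibility and laziness make $K$ a non-negative self-adjoint operator on $L^2(\pi)$ of norm at most one (write $K=\frac12(I+Q)$ with $Q$ Markov and reversible), so the inequality of Mihail \cite{mihail:89}, namely $\|K\bar f_n\|_{2,\pi}^2 \le \pscal{\bar f_n}{K\bar f_n}_\pi$, applies; combined with $\bar f_{n+1} = K\bar f_n$ and the identity $\pscal{\bar f_n}{K\bar f_n}_\pi = v_n - \e(f_n, f_n)$ this gives $v_{n+1} \le v_n - \e(f_n, f_n) \le v_n$, so $(v_n)_{n \ge 0}$ is non-increasing. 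Next, homogeneity of $\e(\cdot,\cdot)$, $\textsf{Var}_\pi(\cdot)$ and $\|\cdot\|_\star$ lets me apply definition (\ref{eq:zeta:sg}) to $f_n/\|f_n\|_\star$: whenever $v_n > \zeta\|f_n\|_\star^2$ — a fortiori whenever $v_n > \zeta M^2$ — one obtains, using $\textsf{SpecGap}_\zeta(K)\ge 0$ and $\|f_n\|_\star \le M$, that $\e(f_n, f_n) \ge \textsf{SpecGap}_\zeta(K)(v_n - \zeta M^2/2)$. Writing $\gamma \eqdef \textsf{SpecGap}_\zeta(K) \in [0,1]$ and substituting into the recursion: if $v_n > \zeta M^2$, then $v_{n+1} - \zeta M^2/2 \le (1-\gamma)(v_n - \zeta M^2/2)$.

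To conclude I would set $T \eqdef \inf\{n \ge 0 : v_n \le \zeta M^2\}$. For $n < T$ the last display holds at steps $0, \dots, n-1$ (the iterate $v_k - \zeta M^2/2$ staying above $\zeta M^2/2 > 0$ throughout), so iterating $n$ times gives $v_n - \zeta M^2/2 \le (1-\gamma)^n(v_0 - \zeta M^2/2)$, hence $v_n \le (1-\gamma)^n v_0 + \zeta M^2$; and here $v_0 > \zeta M^2$, so $(1-\gamma)^n v_0 = \max(v_0,\zeta M^2)(1-\gamma)^n$. For $n \ge T$, monotonicity of $(v_k)$ gives $v_n \le v_T \le \zeta M^2 \le \max(v_0,\zeta M^2)(1-\gamma)^n + \zeta M^2$. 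In both cases the required bound holds, and then $\|\pi_0 K^n - \pi\|_\tv^2 \le v_n$ finishes the argument. The one delicate point — and where the laziness assumption really earns its keep — is that the geometric contraction of the second step is only available while $v_n$ stays above the threshold $\zeta M^2$; it is precisely the monotonicity of $(v_n)$, itself a by-product of Mihail's inequality, that prevents $v_n$ from climbing back above the threshold and so lets the case split close cleanly.
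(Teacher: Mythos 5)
Your proof is correct and follows essentially the same route as the paper's: the $L^1\le L^2$ reduction of total variation to $\textsf{Var}_\pi(K^nf_0)$, the laziness/Mihail step giving $\textsf{Var}_\pi(Kf)\le\textsf{Var}_\pi(f)-\e(f,f)$, and the application of the definition of $\textsf{SpecGap}_\zeta(K)$ to $f_n/\|f_n\|_\star$ together with $\|K f\|_\star\le\|f\|_\star$. The only (cosmetic) difference is how the iteration is closed: the paper propagates a single one-step inequality for $\max(\textsf{Var}_\pi(f),\zeta\|f\|_\star^2)$ and sums a geometric series, whereas you split time at the first instant the variance drops below $\zeta\|f_0\|_\star^2$ and invoke monotonicity of $(v_n)$ afterwards.
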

\begin{proof}
See Section \ref{sec:proof:lem:key}.
\end{proof}

We now highlight an approach to  lower bound $\textsf{SpecGap}_\zeta(K)$ and use Lemma \ref{lem:key}. This is the same approach used in the proof  of Theorem \ref{thm:mixing:lm}. Hence the following discussion can also be viewed as a rough sketch of the proof of Theorem \ref{thm:mixing:lm}. To proceed we first introduce a related concept  of restricted spectral gap. If $\Xset_0\subseteq\Xset$ is a non-empty measurable subset such that $\pi(\Xset_0)>0$, the $\Xset_0$-restricted spectral gap of $K$ is defined as
\[\textsf{SpecGap}_{\Xset_0}(K) \eqdef \inf\left\{\frac{\int_{\Xset_0}\int_{\Xset_0}\pi(\rmd x)K(x,\rmd y)(f(y)-f(x))^2}{\int_{\Xset_0}\int_{\Xset_0}\pi(\rmd x)\pi(\rmd y)(f(y)-f(x))^2},\;f:\;\Xset\to\rset\right\},\]
where the infimum is taken over all measurable functions $f$ such that \\$\int_{\Xset_0}\int_{\Xset_0}\pi(\rmd x)\pi(\rmd y)(f(y)-f(x))^2>0$.    The next result shows that these restricted spectral gaps can be used to lower bound $\textsf{SpecGap}_\zeta(K)$.
\begin{lemma}\label{lem:useful}
Given $\zeta\in (0,1/2)$, and taking $\|\cdot\|_\star=\|\cdot\|_{m,\pi}$, for some real number $m \in (2,+\infty]$, let $\Xset_\zeta$ be a measurable subset of $\Xset$ such that $\pi(\Xset_\zeta)\geq 1-\left(\frac{\zeta}{10}\right)^{1+ \frac{2}{m-2}}$. Then we have  
\[\textsf{SpecGap}_\zeta(K)\geq \textsf{SpecGap}_{\Xset_\zeta}(K).\]
\end{lemma}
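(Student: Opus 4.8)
The plan is to test the definition (\ref{eq:zeta:sg}) of $\textsf{SpecGap}_\zeta(K)$ directly against the restricted Dirichlet form. Fix any $f\in L^2(\pi)$ with $\|f\|_{m,\pi}=1$ and $\textsf{Var}_\pi(f)>\zeta$; it suffices to prove that
\[\e(f,f)\;\geq\;\textsf{SpecGap}_{\Xset_\zeta}(K)\,\Big(\textsf{Var}_\pi(f)-\tfrac{\zeta}{2}\Big),\]
since taking the infimum over all such $f$ then yields the claim. Note that $\textsf{Var}_\pi(f)-\zeta/2>\zeta/2>0$, so the denominator in (\ref{eq:zeta:sg}) is harmless.

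First I would bound $\e(f,f)$ from below by the restricted form: discarding from the double integral all pairs $(x,y)$ that are not both in $\Xset_\zeta$ gives
\[\e(f,f)=\frac12\int\!\!\int (f(y)-f(x))^2\pi(\rmd x)K(x,\rmd y)\;\geq\;\frac12\int_{\Xset_\zeta}\!\!\int_{\Xset_\zeta}(f(y)-f(x))^2\pi(\rmd x)K(x,\rmd y).\]
Setting $J\eqdef\int_{\Xset_\zeta}\int_{\Xset_\zeta}(f(y)-f(x))^2\pi(\rmd x)\pi(\rmd y)$, if $J>0$ then the definition of $\textsf{SpecGap}_{\Xset_\zeta}(K)$ lets me replace $K(x,\rmd y)$ by $\pi(\rmd y)$ at the cost of a factor $\textsf{SpecGap}_{\Xset_\zeta}(K)$, so that $\e(f,f)\geq \tfrac12\,\textsf{SpecGap}_{\Xset_\zeta}(K)\,J$. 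The task therefore reduces to the purely measure-theoretic inequality $J\geq 2\textsf{Var}_\pi(f)-\zeta$, which (since $\textsf{Var}_\pi(f)>\zeta$) also delivers the needed $J>0$ and then the displayed bound.

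To prove $J\geq 2\textsf{Var}_\pi(f)-\zeta$, I would expand the squares. Writing $\delta\eqdef\pi(\Xset_\zeta^c)\le(\zeta/10)^{m/(m-2)}\le 1$, a direct computation gives the identity
\[2\textsf{Var}_\pi(f)-J \;=\; 2\delta\!\int_{\Xset_\zeta}\!f^2\rmd\pi \;+\; 2\!\int_{\Xset_\zeta^c}\!f^2\rmd\pi \;-\;4\pi(f)\!\int_{\Xset_\zeta^c}\!f\rmd\pi \;+\; 2\Big(\!\int_{\Xset_\zeta^c}\!f\rmd\pi\Big)^2,\]
so it is enough to bound the right-hand side by $\zeta$. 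Using $\|f\|_{m,\pi}=1$ together with $\|f\|_{2,\pi}\le\|f\|_{m,\pi}$, $\|f\|_{1,\pi}\le\|f\|_{m,\pi}$, and Hölder's inequality on $\Xset_\zeta^c$, the four terms are controlled by $2\delta$, $2\delta^{1-2/m}$, $4\delta^{1-1/m}$ and $2\delta^{2-2/m}$ respectively. Since $\delta\le1$ and $1-2/m$ is the smallest of the exponents $1,\,1-\tfrac2m,\,1-\tfrac1m,\,2-\tfrac2m$ appearing, the right-hand side is at most $(2+2+4+2)\,\delta^{1-2/m}=10\,\delta^{1-2/m}$; and the exponent $1+\tfrac{2}{m-2}=\tfrac{m}{m-2}$ in the hypothesis $\pi(\Xset_\zeta)\ge1-(\zeta/10)^{1+2/(m-2)}$ is chosen precisely so that $\delta^{1-2/m}\le\zeta/10$, giving $2\textsf{Var}_\pi(f)-J\le\zeta$ as wanted. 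The case $m=+\infty$ is the limiting case $\delta^{1-2/m}\to\delta$ and is handled identically with $|f|\le1$ $\pi$-a.e.

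There is no genuine analytic obstacle: the only step needing care is the bookkeeping above — writing the identity for $2\textsf{Var}_\pi(f)-J$ correctly and checking that each exponent of $\delta$ produced by Hölder is at least $1-2/m$, so the whole ``variance deficit'' collapses to a constant multiple of $\delta^{1-2/m}$. I would also record in passing that $\|\cdot\|_{m,\pi}$ is an admissible choice of $\|\cdot\|_\star$: the contraction property $\|Kf\|_{m,\pi}\le\|f\|_{m,\pi}$ follows from Jensen's inequality and the $\pi$-invariance of $K$, and $\|f\|_{m,\pi}=0$ forces $\textsf{Var}_\pi(f)=0$.
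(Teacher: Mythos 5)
Your proof is correct and follows essentially the same route as the paper: restrict the Dirichlet form to $\Xset_\zeta\times\Xset_\zeta$, and show via H\"older that the restricted double integral $J$ dominates $2\textsf{Var}_\pi(f)-\zeta$, which is exactly how the constant $10$ and the exponent $1+\tfrac{2}{m-2}$ arise in the paper as well. The only (cosmetic) difference is that you bound the deficit $2\textsf{Var}_\pi(f)-J$ by expanding the squares into four explicit terms, whereas the paper splits the double integral over the regions $\Xset_\zeta\times\Xset_\zeta^c$ and $\Xset_\zeta^c\times\Xset_\zeta^c$ and uses $(a+b)^2\le 2a^2+2b^2$; your version also has the small merit of explicitly checking $J>0$ so that $f$ is admissible for $\textsf{SpecGap}_{\Xset_\zeta}(K)$.
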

\begin{proof}
See Section \ref{sec:proof:lem:useful}.
\end{proof}

We combine the above two lemmas as follows. Fix $\zeta_0\in(0,1)$. Suppose that we can choose the initial distribution $\pi_0$ such that $\|f_0\|_{m,\pi}\leq B$, for some constant $B\geq 1$ (warm start). In that  case Lemma \ref{lem:key} with $\|\cdot\|_\star = \|\cdot\|_{m,\pi}$, and $\zeta = \zeta_0^2/(2B^2)$ gives for all $n\geq 1$,
\[\|\pi_0 K^n -\pi\|_\tv^2  \leq  \max\left(1, \textsf{Var}_\pi(f_0)\right)\left(1 - \textsf{SpecGap}_\zeta(K)\right)^n + \frac{\zeta_0^2}{2}.\]
Therefore, for this given choice $\zeta = \zeta_0^2/(2B^2)$, if we can find  a set $\Xset_\zeta$ such that $\pi(\Xset_\zeta)\geq 1-\left(\frac{\zeta}{10}\right)^{1+ \frac{2}{m-2}}$, then Lemma \ref{lem:useful} asserts that $\textsf{SpecGap}_\zeta(K)\geq \textsf{SpecGap}_{\Xset_\zeta}(K)$. Hence it holds that
\[\|\pi_0 K^N -\pi\|_\tv  \leq \zeta_0,\;\; \mbox{ for all } \;\; N\geq \frac{\log\left(\frac{2B^2}{\zeta_0^2}\right)}{\textsf{SpecGap}_{\Xset_\zeta}(K)}.\]
If $\pi$ is a posterior distribution from some Bayesian analysis, posterior contraction results can be used to find the sets $\Xset_\zeta$. Furthermore,  standard techniques used to establish Poincare inequalities can be similarly applied to lower bound $\textsf{SpecGap}_{\Xset_\zeta}(K)$. We illustrate these ideas in Theorem \ref{thm:mixing:lm}.

\section{Mixing times of mixtures of Markov kernels}\label{sec:mixing}
We consider here the case where $\pi$ is a discrete mixture of log-concave densities of the form
\begin{equation}\label{pi:mix}
\pi(\rmd x) \propto \sum_{i\in\I} \pi(i,x)\rmd x,\end{equation}
for nonnegative measurable functions $\{\pi(i,\cdot),\;i\in\I\}$, where $\I$ is a nonempty finite set. 
Sampling from mixtures is more challenging than sampling from log-concave densities. For instance it is shown in  \cite{ge:etal:18}  that no polynomial-time MCMC algorithm exists to sample from mixtures of densities with inequal covariance matrix, if the algorithm uses only the marginal density of the mixture and its derivative.  One major shortcoming of  \cite{ge:etal:18} is that their algorithm is impractical when the number of mixture components is very large. In such settings, a Gibbs sampler is commonly employed (based on conditional distributions). We show below that this Gibbs sampler is fast mixing in some cases.

To avoid confusion we will write $\bar\pi$ to denote the joint distribution on $\I\times \Xset$ defined as
\[\bar\pi(D\times B) = \frac{\sum_{i\in D} \int_B\pi(i,x)\rmd x}{\sum_{i\in \I} \int_{\Xset}\pi(i,x)\rmd x},\;\;D\subseteq\I,\;B\in\B.\]
 Let $\pi(i\vert x)\propto \pi(i,x)$ (resp. $\pi(i)\propto \int_{\Xset}\pi(i,x)\rmd x$) denote the implied conditional (resp. marginal) distribution on $\I$, and let $\pi_i(\rmd x) \propto \pi(i,x)\rmd x$ be the implied conditional distribution on $\Xset$. For each $i\in\I$, let $K_i$ be a transition kernel on $\Xset$ with invariant distribution $\pi_i$. We assume that $K_i$ is reversible with respect to $\pi_i$, and ergodic (phi-irreducible and aperiodic).  We then consider  the Markov kernel $K$ defined as 
\begin{equation}\label{K:mix}
K(x,\rmd y) \eqdef \sum_{i\in\I}\pi(i\vert x) K_i(x,\rmd y),\end{equation}
that is reversible with respect to $\pi$  as in (\ref{pi:mix}).  In  \cite{madras:randall:02} the authors developed a very nice lower bound on the spectral gap of $K$ knowing the spectral gaps of the  $K_i$'s. 
Fix $\kappa>0$, and construct a graph on $\I$ such that there is an edge between $i,j\in\I$ if and only if
\[\int_\Xset\min\left(\pi_i(x),\pi_j(x)\right)\rmd x\geq \kappa.\]
If $D(\I)$ denotes the diameter of the graph thus defined\footnote{The diameter of a graph is the length (the number of edges) of the longest among all the shortest paths between all pairs of vertices.}, Theorem 1.2 of \cite{madras:randall:02} says that
\begin{equation}\label{bound:madras}
\textsf{SpecGap}(K) \geq \frac{\kappa}{2D(\I)}\min_{i\in\I} \left\{\pi(i)\textsf{SpecGap}(K_i)\right\}.\end{equation}
The lower bound in (\ref{bound:madras}) can be extremely small, particularly when $\I$ is large. Indeed, the ratio $\kappa / D(\I)$ would then be small: taking $\kappa$ large makes $D(\I)$ large.  Furthermore,  in problems where $\I$ is large,  $\pi(i)$ is typically exponentially small for many components $i$.   We have  the following analog of Lemma \ref{lem:useful}.
\begin{theorem}\label{thm:mixing}Let $\pi$ as in (\ref{pi:mix}), and $K$ as in (\ref{K:mix}) for some family $\{K_i,\;i\in\I\}$ of Markov kernels on $\Xset$. Choose $\|\cdot\|_\star=\|\cdot\|_{m,\pi}$, for some real number $m\in (2,+\infty]$. Fix $\I_0\subseteq \I$, and $\{\cB_i,\;i\in\I_0\}$ a family of nonempty  measurable subsets of $\Xset$, and set $\bar\cB \eqdef \cup_{i\in\I_0}\{i\}\times \cB_i$. Fix $\kappa>0$, and let a graph on $\I_0$ be such that
\[\int_{\cB_{i}\cap\cB_j}\min\left(\frac{\pi_{i}(x)}{\pi_i(\cB_i)},\frac{\pi_j(x)}{\pi_j(\cB_j)}\right) \rmd x\geq \kappa,\]
whenever there is an edge between  $i,j\in\I_0$. Let $\textsf{D}(\I_0)$ denote the diameter of the graph. Given $\zeta\in (0,1/2)$, if $\bar\pi(\bar\cB) \geq 1-\left(\frac{\zeta}{10}\right)^{1 + \frac{2}{m-2}}$, then
\[\textsf{SpecGap}_\zeta(K) \geq \frac{\kappa}{2\textsf{D}(\I_0)} \min_{i\in\I_0}\left\{\pi_i(\cB_i)^2\right\}\min_{i\in\I_0}\left\{\pi(i)\textsf{SpecGap}_{\cB_i}(K_i)\right\}.\]
\end{theorem}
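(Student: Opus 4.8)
The plan is to carry the Madras--Randall variance decomposition behind \eqref{bound:madras} out in two ways at once: on the sub‑probability pieces $\pi_i\mathbbm{1}_{\cB_i}$ rather than on the full $\pi_i$, and in combination with the $L^m$‑localization underlying Lemma~\ref{lem:useful}, so that what comes out is the $\zeta$‑spectral gap. We may assume $\textsf{D}(\I_0)<\infty$ and $\pi(i),\pi_i(\cB_i),\textsf{SpecGap}_{\cB_i}(K_i)>0$ for all $i\in\I_0$, since otherwise the right‑hand side is $0$; in particular the graph on $\I_0$ is connected. Write $\mu_i$ for the normalized restriction of $\pi_i$ to $\cB_i$, $p_i\eqdef\pi(i)\pi_i(\cB_i)=\bar\pi(\{i\}\times\cB_i)$ (so $\sum_{i\in\I_0}p_i=\bar\pi(\bar\cB)$), $\tilde\nu\eqdef\bar\pi(\bar\cB)^{-1}\sum_{i\in\I_0}p_i\mu_i$ (a probability measure on $\Xset$), and $\Sigma(f)\eqdef\sum_{i\in\I_0}\textsf{Var}_{\mu_i}(f)$. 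Fix $f$ admissible for $\textsf{SpecGap}_\zeta(K)$, i.e.\ $\|f\|_{m,\pi}=1$ and $\textsf{Var}_\pi(f)>\zeta$; the goal is to bound $\e(f,f)/(\textsf{Var}_\pi(f)-\zeta/2)$ below by the claimed quantity.

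For the numerator I would use the identity $\e(f,f)=\sum_{i\in\I}\pi(i)\e_i(f,f)$ (immediate from \eqref{K:mix} and $\pi(\rmd x)\pi(i\vert x)=\pi(i)\pi_i(\rmd x)$, with $\e_i$ the Dirichlet form of $K_i$), then drop the $i\notin\I_0$, bound each $\e_i(f,f)$ below by its restriction to $\cB_i\times\cB_i$, and apply the definition of $\textsf{SpecGap}_{\cB_i}(K_i)$ (recalling $\iint_{\cB_i\times\cB_i}(f(x)-f(y))^2\pi_i(\rmd x)\pi_i(\rmd y)=2\pi_i(\cB_i)^2\textsf{Var}_{\mu_i}(f)$) to get
\[\e(f,f)\ \geq\ \sum_{i\in\I_0}\pi(i)\,\textsf{SpecGap}_{\cB_i}(K_i)\,\pi_i(\cB_i)^2\,\textsf{Var}_{\mu_i}(f)\ \geq\ \Big(\min_{i\in\I_0}\pi_i(\cB_i)^2\Big)\Big(\min_{i\in\I_0}\pi(i)\textsf{SpecGap}_{\cB_i}(K_i)\Big)\,\Sigma(f);\]
this is where the factor $\min_i\pi_i(\cB_i)^2$ is born.

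The denominator is handled in two steps. First, localization: since $\bar\pi(\bar\cB)\,\tilde\nu\leq\pi$ as measures on $\Xset$, writing $\rho\eqdef\rmd\big(\pi-\bar\pi(\bar\cB)\tilde\nu\big)/\rmd\pi\in[0,1]$ (so $\int\rho\,\rmd\pi=\bar\pi(\bar\cB^c)$) and taking $c=\tilde\nu(f)$, one has $\textsf{Var}_\pi(f)\leq\pi\big((f-c)^2\big)=\bar\pi(\bar\cB)\,\textsf{Var}_{\tilde\nu}(f)+\int(f-c)^2\rho\,\rmd\pi$, and H\"older (exponents $m/2$, $m/(m-2)$) together with $\rho\leq 1$ bounds the last term by $\|f-c\|_{m,\pi}^2\,\bar\pi(\bar\cB^c)^{1-2/m}$; using $\bar\pi(\bar\cB)\tilde\nu\leq\pi$ to get $|c|\leq\tilde\nu(|f|)\leq\bar\pi(\bar\cB)^{-1}$ and the hypothesis $\bar\pi(\bar\cB^c)\leq(\zeta/10)^{1+2/(m-2)}$ (which gives $\bar\pi(\bar\cB^c)^{1-2/m}\leq\zeta/10$ and $\bar\pi(\bar\cB)\geq 19/20$) makes this remainder $\leq\zeta/2$. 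Hence $\textsf{Var}_\pi(f)-\zeta/2\leq\textsf{Var}_{\tilde\nu}(f)$. Second, the Madras--Randall step: decomposing $\textsf{Var}_{\tilde\nu}(f)$ by the mixture index yields $\sum_{i\in\I_0}\tfrac{p_i}{\bar\pi(\bar\cB)}\textsf{Var}_{\mu_i}(f)+\tfrac12\sum_{i,j\in\I_0}\tfrac{p_ip_j}{\bar\pi(\bar\cB)^2}(\mu_i(f)-\mu_j(f))^2$; for each pair I route along a graph geodesic (length $\leq\textsf{D}(\I_0)$), apply Cauchy--Schwarz, and control each edge $(a,b)$ by the overlap lemma: with $\lambda_{ab}$ the probability measure proportional to $\min(\mu_a,\mu_b)$ — total mass $c_{ab}\geq\kappa$ by the edge condition, so $\lambda_{ab}\leq\kappa^{-1}\mu_a$ and $\lambda_{ab}\leq\kappa^{-1}\mu_b$ — Jensen gives $(\mu_a(f)-\lambda_{ab}(f))^2=(\lambda_{ab}(f-\mu_a(f)))^2\leq\kappa^{-1}\textsf{Var}_{\mu_a}(f)$, hence $(\mu_a(f)-\mu_b(f))^2\leq\tfrac{2}{\kappa}(\textsf{Var}_{\mu_a}(f)+\textsf{Var}_{\mu_b}(f))$. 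Assembling the $\leq\textsf{D}(\I_0)$ edge bounds along a geodesic and summing over $(i,j)$ (with $\sum p_ip_j/\bar\pi(\bar\cB)^2=1$) bounds $\textsf{Var}_{\tilde\nu}(f)$ by $\tfrac{2\textsf{D}(\I_0)}{\kappa}\Sigma(f)$ — this is exactly the computation behind \eqref{bound:madras}, now restricted. Dividing the numerator bound by this and taking the infimum over $f$ finishes the proof. (Note $\Sigma(f)>0$ for admissible $f$: if $\Sigma(f)=0$, connectivity and the positivity of each $c_{ab}$ would force $f$ to be $\tilde\nu$‑a.e.\ constant, whence $\textsf{Var}_\pi(f)\leq\zeta/2$, contradicting admissibility.)

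The delicate point is the denominator, and in particular that the \emph{same} reference measure $\tilde\nu$ must serve both steps: it must satisfy $\bar\pi(\bar\cB)\tilde\nu\leq\pi$ (which powers the H\"older estimate and fixes the shape of the hypothesis $\bar\pi(\bar\cB)\geq 1-(\zeta/10)^{1+2/(m-2)}$, the constant $10$ being calibrated so the remainder lands below $\zeta/2$) while also being a $\{p_i/\bar\pi(\bar\cB)\}$‑mixture of the $\mu_i$ so that the graph argument applies; the two requirements point to $\tilde\nu=\bar\pi(\bar\cB)^{-1}\sum_{i\in\I_0}\pi(i)\pi_i(\cdot\cap\cB_i)$ and to essentially nothing else. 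One further point worth flagging: the per‑edge estimate works cleanly only because $f$ is a function on $\Xset$, so $\mu_a$ and $\mu_b$ are applied to the \emph{same} $f$; on the augmented space $\I\times\Xset$ with a general function of $(i,x)$ one would pick up a ``bridge'' term between $f(a,\cdot)$ and $f(b,\cdot)$ that the Gibbs update $\pi(i\vert x)$ cannot dominate, which is why one stays with functions on $\Xset$ throughout.
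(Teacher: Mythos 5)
Your argument is, in substance, the paper's own proof: the numerator is handled identically (drop $i\notin\I_0$, restrict each Dirichlet form to $\cB_i\times\cB_i$, invoke $\textsf{SpecGap}_{\cB_i}(K_i)$), and the denominator combines the $L^m$--H\"older localization of Lemma~\ref{lem:useful} with the Madras--Randall decomposition of variance along graph geodesics; your overlap estimate via $\lambda_{ab}\propto\min(\mu_a,\mu_b)$ is exactly a re-derivation of Lemma~\ref{mixing:lem2}, and your centering at $c=\tilde\nu(f)$ with $\bar\pi(\bar\cB)\tilde\nu\le\pi$ is a slightly cleaner route to the same remainder bound $\le\zeta/2$. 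The one quantitative slip is in the final assembly of the denominator: the within-group term $\sum_i \tfrac{p_i}{\bar\pi(\bar\cB)}\textsf{Var}_{\mu_i}(f)\le\Sigma(f)$ is not absorbed by your between-group bound, so what your steps actually give is $\textsf{Var}_{\tilde\nu}(f)\le\bigl(1+\tfrac{2\textsf{D}(\I_0)}{\kappa}\bigr)\Sigma(f)$ rather than $\tfrac{2\textsf{D}(\I_0)}{\kappa}\Sigma(f)$, i.e.\ the constant $\tfrac{\kappa}{2\textsf{D}(\I_0)}$ degrades to $\tfrac{\kappa}{2\textsf{D}(\I_0)+\kappa}\ge\tfrac{\kappa}{3\textsf{D}(\I_0)}$. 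This is repaired by using the sharp per-edge inequality $\bigl(\mu_a(f)-\mu_b(f)\bigr)^2\le\tfrac{2(1-\kappa)}{\kappa}\bigl(\textsf{Var}_{\mu_a}(f)+\textsf{Var}_{\mu_b}(f)\bigr)$ (which is what Lemma~\ref{mixing:lem2} delivers, your $2(x-y)^2\le 2x^2+2y^2$ step being the lossy one): since $\textsf{D}(\I_0)\ge 1$, the resulting $-2\textsf{D}(\I_0)\Sigma(f)$ leaves room to absorb the extra $\Sigma(f)$. It is worth noting that the paper's own write-up elides the same diagonal/within-group contribution when it applies the path bound to all pairs $(i,j)$ including $i=j$, so your explicit within/between bookkeeping is actually the more careful account of where the constant comes from.
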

\begin{proof}
See Section \ref{sec:proof:thm:mixing}.
\end{proof}

Note the similarity with (\ref{bound:madras}).  However Theorem \ref{thm:mixing} allows us to restrict the analysis of the chain to the set  $\bar\cB$.  Theorem \ref{thm:mixing} is basically a mixture analog of Lemma \ref{lem:useful}. 
In the important special case where $K_i(x,\rmd y) = \pi_i(\rmd y)$, and one chooses $\cB_i=\Xset$, Theorem \ref{thm:mixing} shows that \[\textsf{SpecGap}_\zeta(K) \geq \frac{\kappa}{2\textsf{D}(\I_0)}\min_{i\in\I_0}\left\{\pi(i)\right\},\;\mbox{ whereas }\;\; \textsf{SpecGap}(K) \geq \frac{\kappa}{2\textsf{D}(\I)}\min_{i\in\I}\left\{\pi(i)\right\}. \]
As we show with the next example these two lower bounds can have very different dependence on the dimension of $\Xset$.

\section{Analysis of a Gibbs sampler}\label{sec:ex:linregr}
We consider the Bayesian treatment of a linear regression problem with response variable $z\in\rset^n$,  and covariate matrix $X\in\rset^{n\times p}$. The regression parameter is denoted $\theta\in\rset^p$. In settings where the number of regressors $p$ is very large, and one is interested in selecting the most significant regressors and the corresponding coefficients, it is common practice to introduce an additional variable selection parameter $\delta\in\Delta\eqdef\{0,1\}^p$, and to use a spike-and-slab prior distribution on  $\theta$.
More precisely, given  $\textsf{q}\in (0,1)$ we assume that the prior distribution of $\delta$ is given by
\[\omega_\delta  =  \textsf{q}^{\|\delta\|_0}(1-\textsf{q})^{p-\|\delta\|_0},\;\;\delta\in\Delta,\]
and given $\rho,\gamma\in (0,+\infty)$, we assume that the components of $\theta$ are conditionally independent given $\delta$, and we assume that $\theta_j\vert \delta$ has density $\textbf{N}(0,\frac{1}{\rho})$ if $\delta_j=1$, and density $\textbf{N}(0,\gamma)$  otherwise, where $\textbf{N}(\mu,v^2)$ denotes the univariate Gaussian distribution with mean $\mu$ and variance $v^2$. The resulting posterior distribution on $\Delta\times\rset^p$ is
\begin{equation}\label{Pi:check}
\Pi(\delta,\rmd \theta\vert z)\propto \omega_\delta \frac{e^{ -\frac{1}{2}\theta' D_{(\delta)}^{-1} \theta}}{\sqrt{\det\left(2\pi D_{(\delta)}\right)}}  e^{-\frac{1}{2\sigma^2}\|z-X\theta\|_2^2} \rmd \theta,
\end{equation}
where $D_{(\delta)}\in\rset^{p\times p}$ is a diagonal matrix with $j$-th diagonal element equal to $1/\rho$ if $\delta_j=1$, and $\gamma$ if $\delta_j=0$. The regression error $\sigma$ is assumed known. This model is very popular in the application (\cite{george:mcculloch97,ishwaran:rao:2005,narisetti:he:14}), mainly because it is straightforward to sample from (\ref{Pi:check}). Indeed,  the posterior conditional distribution $\Pi(\delta\vert \theta,z)$ is a product of independent Bernoulli distributions, with closed form probabilities:
\begin{multline}\label{eq:pj}
\Pi(\delta\vert \theta,z) =  \prod_{j=1}^p\left[\textsf{q}_{j}\right]^{\delta_j}\left[1- \textsf{q}_{j}\right]^{1-\delta_j},\;\\
\;\;\mbox{ where }\; \textsf{q}_{j}\eqdef \frac{1}{1+\frac{1-\textsf{q}}{\textsf{q}}\sqrt{\frac{1}{\gamma\rho}}e^{\frac{1}{2}\left(\rho-\frac{1}{\gamma}\right)\theta_j^2}},\;j=1,\ldots,p.\end{multline}
 Given $\delta$,  the conditional distribution of $\theta$ given $\delta$ is $\textbf{N}_p(m_\delta,\sigma^2\Sigma_\delta)$, with $m_\delta$ and $\Sigma_\delta$ given by
\begin{equation}\label{mu:Sigma}
m_\delta \eqdef \Sigma_\delta X'z \;\;\mbox{ and }\;\; \Sigma_\delta \eqdef \left(X'X + \sigma^2D_{(\delta)}^{-1}\right)^{-1}.
\end{equation}
 Put together these two conditional distributions yield a simple Gibbs sampling algorithm for (\ref{Pi:check}).  We consider the following version that is modified so that the resulting Markov chain is lazy as  required by our theory. 
 
\begin{algorithm}\label{algo1}
For some initial distribution $\nu_0$ on $\rset^p$, draw $u_0\sim \nu_0$. Given $u_0,\ldots,u_k$ for some $k\geq 0$, draw independently $I_{k+1}\sim\textsf{Ber}(0.5)$.
\begin{enumerate}
\item If $I_{k+1}=0$, set $u_{k+1} = u_k$.
\item If $I_{k+1}=1$,
\begin{enumerate}
\item Draw $\delta\sim\Pi(\cdot\vert u_k,z)$ as given in (\ref{eq:pj}), and 
\item draw  $u_{k+1}\sim \textbf{N}_p(m_\delta,\sigma^2\Sigma_\delta)$ as given in (\ref{mu:Sigma}).
\end{enumerate}
\end{enumerate}
\vspace{-0.3cm}
\begin{flushright}
$\square$
\end{flushright}
\end{algorithm}

We analyze the mixing time of the marginal chain $\{u_k,\;k\geq 0\}$ from Algorithm \ref{algo1}. As easily seen, $\{u_k,\;k\geq 0\}$ is a Markov chain with invariant distribution
\begin{equation}\label{def:Pi:lm}
\Pi(\rmd\theta\vert z) \propto \sum_{\delta\in\Delta}\omega_\delta  \frac{e^{ -\frac{1}{2}\theta' D_{(\delta)}^{-1} \theta}}{\sqrt{\det\left(2\pi D_{(\delta)}\right)}}  e^{-\frac{1}{2\sigma^2}\|z-X\theta\|_2^2} \rmd \theta,
\end{equation}
which is of the form (\ref{pi:mix}), and with transition kernel 
\begin{equation}\label{def:cP}
K(u,\rmd \theta) \eqdef    \sum_{\omega\in\Delta}\Pi(\omega\vert u,z) \left[\frac{1}{2}\delta_u(\rmd \theta) + \frac{1}{2}\Pi(\rmd \theta\vert \omega,z)\right],
\end{equation}
which is of the form (\ref{K:mix}). In order to bring in the discussion the idea of posterior contraction toward  a true value of the parameter, we need to assume a model for the data.  

\begin{assumption}\label{H:lin:mod}
\begin{enumerate}
\item The data $z\in\rset^n$ is the realization of a random variable $Z \eqdef (Z_1,\ldots,Z_n) \sim \textbf{N}(X\theta_\star,\sigma^2 I_n)$, for some unknown parameter $\theta_\star\in\rset^p$, and a known absolute constant $\sigma^2>0$. 
\item The matrix $X$ is non-random and normalized such that
\begin{equation}\label{eq:norm:X}
\|X_j\|_2^2 = n, \;\;\;j=1,\ldots,p,
\end{equation}
where $X_j\in\rset^n$ denotes the $j$-th column of $X$. 
\item The prior parameter $\textsf{q}$ is chosen such that 
\begin{equation}
\label{choice:q}
\frac{\textsf{q}}{1-\textsf{q}} = \frac{1}{p^{u+1}},\end{equation}
for some absolute constant $u>0$.
\item The prior parameters $\rho$ and $\gamma$ are taken such that
\begin{equation}\label{rho:lm}
0 <\gamma < \frac{1}{2\rho}.
\end{equation}
\end{enumerate}
We will write $\PP_\star$ (resp. $\PE_\star$) to denote the probability distribution (resp. expectation operator) of the random variable $Z$ assumed in H\ref{H:lin:mod}. 
\end{assumption}
\medskip

\begin{remark} Overall these are very basic assumptions. We assume in H\ref{H:lin:mod}-(1) that the statistical model is well specified, and there is a true value of the parameter denoted $\theta_\star$. The assumption that the regression errors are Gaussian is imposed mostly for simplicity, and can be replaced by a sub-Gaussian assumption,  with minimal change to what follows.
The prior assumption in H\ref{H:lin:mod}-(3) is fairly standard, and follows \cite{castillo:etal:14,narisetti:he:14,atchade:15b}. 
H\ref{H:lin:mod}-(4)  simply says that the variance of the slab prior density should be sufficiently larger than the variance of the spike prior density.
\end{remark}

To proceed we introduce some notations. For $\theta,\theta'\in\rset^p$, we write $\theta\cdot\theta'\in\rset^p$ to represent the component-wise product of $\theta$ and $\theta'$.  For $\delta\in\Delta$, and $\theta\in\rset^p$, we  write $\theta_\delta$ as a short for $\theta\cdot\delta$, and we define $\delta^c\eqdef 1-\delta$, that is $\delta_j^c=1-\delta_j$, $1\leq j\leq p$. For a matrix $A\in\rset^{q\times p}$, $A_\delta$ (resp. $A_{\delta^c}$) denotes the matrix of $\rset^{q\times \|\delta\|_0}$ (resp. $\rset^{q\times(p-\|\delta\|_0)}$) obtained by keeping only the columns of $A$ for which $\delta_j=1$ (resp. $\delta_j=0$). For two elements $\delta,\delta'$ of $\Delta$, we write $\delta\supseteq\delta'$ to mean that $\delta_j=1$ whenever $\delta'_j=1$. The support of a vector $u\in\rset^p$ is the vector $\textsf{supp}(u)\in\Delta$ such that $\textsf{supp}(u)_j=1$ if and only if $|u_j|>0$.  

An important role is played in the analysis by the matrices
\[L_\delta\eqdef I_n + \frac{1}{\sigma^2}X D_{(\delta)}X'= I_n +\frac{1}{\sigma^2\rho}\sum_{j:\;\delta_j=1} X_jX_j' + \frac{\gamma}{\sigma^2} \sum_{j:\;\delta_j=0} X_jX_j',\;\;\;\;\;\;\delta\in\Delta,\]
and the  coherence of the  matrix $X$ defined for an integer $s\geq 1$ as 
\[\Cset(s) \eqdef \max_{\delta\in\Delta:\;\|\delta\|_0\leq  s }\;\;\max_{j\neq \ell}\;\;  \left|X_j' L_{\delta}^{-1}X_\ell\right|.\]

We will also need the following important assumption.

\begin{assumption}\label{H:tech}
There exist $\varrho>0$ and an integer  $s_0\in\{1,\ldots,p-1\}$, such that
\[\min_{\delta:\;\|\delta\|_0\leq s_0}\;\inf\left\{ \frac{u' \left(X_{\delta^c}'L_\delta^{-1} X_{\delta^c}\right) u}{n\|u\|_2^2},\;u\in\rset^{p-\|\delta\|_0},\; 0<\|u\|_0 \leq s_0\right\}\geq \varrho.\]
\end{assumption}
\medskip

\begin{remark}
For $\gamma$ small enough and $\rho \lesssim \sqrt{n\log(p)}/s$, we show in the appendix that the matrix $L_\delta^{-1}$ can be loosely interpreted  as the projector on the orthogonal of the space spanned by the columns of $X_\delta$. Therefore,  H\ref{H:tech} rules out settings  where a small number of columns of $X$ have the same linear span as all the columns of $X$. Indeed signal recovery becomes nearly impossible in such settings. We show in Lemma \ref{lem:H:tech} in the appendix that if $X$ is a random matrix with i.i.d. standard normal entries (Gaussian ensemble) and $\gamma$ is taken small enough, then H\ref{H:tech} holds with high probability, and 
\[\Cset(s) \leq c_0 \sqrt{n\log(p)},\]
for some universal constant $c_0$, provided that $n \gtrsim  s^2\log(p)$.
\vspace{-0.6cm}
\begin{flushright}
$\square$
\end{flushright}
\end{remark}

We need few more quantities in order to state the theorem. We define
\begin{equation}\label{def:eps:lm}
\epsilon \eqdef \sigma\sqrt{\frac{\log(p)}{n}},\end{equation}
that we view as the signal detectability threshold.  Let $\tilde \delta_\star$ be the element of $\Delta$ that indicates which components of $\theta_\star$ are greater than $\epsilon$ in absolute value (detectable components): $\tilde\delta_{\star,j} =1$ if and only if $|\theta_{\star,j}|>\epsilon$. Components of $\theta_\star$ that are below $\epsilon$ are too small to be detected. This  implies that the element of $\Delta$ toward which we can expect $\Pi(\cdot\vert z)$ to contract is $\tilde\delta_\star$ (here $\Pi(\cdot\vert z)$ refers to the $\delta$-marginal of the joint posterior). We formalize this contraction as follows. Given $k\geq 0$, we define 
\[\mathcal{D}_{k} \eqdef\left\{ \delta\in\Delta:\; \delta\supseteq \tilde\delta_\star,\; \|\delta\|_0 \leq \|\tilde\delta_\star\|_0 + k\right\},\]
which collects models that contain the true model (that is $\tilde\delta_\star$) and have at most $k$ false-positives, and we say that posterior contraction holds if 
\begin{equation}\label{post:contrac}
\Pi(\D_k\vert z)\geq 1-\frac{4}{p^{\frac{u}{2}(k+1)}}.
\end{equation}
We will not directly establish (\ref{post:contrac}). However several existing work suggest that this description of the posterior contraction of $\Pi(\cdot\vert z)$ holds. For instance under similar assumptions as above, \cite{narisetti:he:14} show that $\Pi(\D_0\vert Z)\geq 1-\frac{a_1}{p^{a_2}}$ with high-probability for positive constants $a_1,a_2$.  And when $\tilde\delta_\star = \delta_\star$, \cite{AB:18} shows that (\ref{post:contrac}) holds for a slightly modified version of the posterior distribution (\ref{Pi:check}). We introduce the event
\begin{multline*}
\e_k \eqdef\left\{z\in\rset^n:\;\Pi(\tilde\delta_\star\vert z)\geq 1/2,\;\;\Pi(\D_k\vert z)\geq 1-\frac{4}{p^{\frac{u}{2}(k+1)}},\;\right.\\
\left.\mbox{ and }\;\; \max_{\delta\supseteq\tilde\delta_\star:\;\|\delta\|_0\leq \tilde s_\star+ k}\;\sup_{1\leq j\leq p} \frac{1}{\sigma}\left|\pscal{L_{\delta}^{-1}X_j}{z-X\theta_\star}\right|\leq 2\sqrt{(k+1) n\log(p)}\right\}.\end{multline*}
Note that Gaussian tail bounds easily implies that under  H\ref{H:lin:mod}, the last part of $\e_k$ holds true with high probability. Hence the key condition in $\e_k$ is the posterior contraction assertion that $\Pi(\D_k\vert z) \geq 1-4p^{-u(k+1)/2}$. Here is our main result in this section.

\begin{theorem}\label{thm:mixing:lm}
Suppose  that H\ref{H:lin:mod} and H\ref{H:tech} hold and  Algorithm \ref{algo1} is initialized from $\nu_0 =\Pi(\cdot\vert \delta^{(\textsf{i})}, z)$, for $\delta^{(\mathsf{i})}$ that satisfies $\delta^{(\mathsf{i})}\supseteq\tilde\delta_\star$, with a number of false-positives  $\textsf{FP} \eqdef \|\delta^{(\mathsf{i})}\|_0 - \|\tilde\delta_\star\|_0$. Fix $\zeta_0\in (0,1)$. If  the dataset $z$ belongs to $\e_k$ for some $k\in \{0,\ldots,s_0\}$ that satisfies 
\begin{equation}\label{cond:v}
k +1\geq 4\left(1 +\frac{1}{u}\right)\textsf{FP} +\frac{2 \textsf{FP}}{u} \times \frac{ \log\left(1 + \frac{n \textsf{FP}}{\sigma^2\rho}\right)}{\log(p)}  + \frac{2}{u}\times \frac{\log\left(\frac{320}{\zeta_0^2}\right)}{\log(p)},
\end{equation}
then  the following holds true. There exists a constant  $A$ that does not depend on $p$ nor $\zeta_0$ such that for 
\begin{equation}\label{cond:it:2}
 N\geq \frac{A}{(\gamma\rho)} \log\left(\frac{1}{\zeta_0}\right)p^{\frac{1}{2\varrho}\left(s_\star+ 2\sqrt{1+k} + \frac{\|\tilde\theta_\star\|_1\Cset(\tilde s_\star +k)}{\sigma\sqrt{n\log(p)}}\right)^2} p^{k(u+1)} \left(1+ \frac{n k}{\sigma^2\rho}\right)^{\frac{k}{2}},\end{equation}
we have
\[\|\nu_0 K^N -\Pi(\cdot\vert z)\|_\tv \leq \zeta_0.\]
\end{theorem}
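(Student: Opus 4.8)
The plan is to run the three-step program announced just after Lemma~\ref{lem:useful}: use the warm-start inequality of Lemma~\ref{lem:key}, lower bound the $\zeta$-spectral gap through the mixture bound of Theorem~\ref{thm:mixing} with $\I_0$ a set of ``good'' models, and feed in the posterior contraction encoded in $\e_k$. Throughout I would take $\|\cdot\|_\star=\|\cdot\|_{\infty,\pi}$ (i.e.\ $m=+\infty$, the choice imposing the weakest mass requirement on the exceptional set in Theorem~\ref{thm:mixing}), write $\pi=\Pi(\cdot\vert z)$, $\pi_\delta=\Pi(\cdot\vert\delta,z)=\textbf{N}_p(m_\delta,\sigma^2\Sigma_\delta)$, and note that the marginal chain $K$ in \eqref{def:cP} has the mixture form \eqref{K:mix}--\eqref{pi:mix} with $K_\delta(u,\rmd\theta)=\tfrac12\delta_u(\rmd\theta)+\tfrac12\pi_\delta(\rmd\theta)$.

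\emph{Step 1 (warm start).} Since $\Pi(\theta\vert z)\geq\Pi(\delta^{(\mathsf{i})}\vert z)\,\Pi(\theta\vert\delta^{(\mathsf{i})},z)$, the density $f_0=\rmd\nu_0/\rmd\pi$ satisfies $\|f_0\|_{\infty,\pi}\leq B\eqdef 1/\Pi(\delta^{(\mathsf{i})}\vert z)\geq 1$. Computing the marginal likelihood of $\delta$ as a Gaussian integral gives $\Pi(\delta\vert z)\propto\omega_\delta(\det L_\delta)^{-1/2}e^{-z'L_\delta^{-1}z/(2\sigma^2)}$ (because $z\mid\delta\sim\textbf{N}(0,\sigma^2 L_\delta)$). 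Using $\Pi(\tilde\delta_\star\vert z)\geq 1/2$ from $\e_k$ (so the normalizing constant is at most $2\,\omega_{\tilde\delta_\star}(\det L_{\tilde\delta_\star})^{-1/2}e^{-z'L_{\tilde\delta_\star}^{-1}z/(2\sigma^2)}$), together with $\delta^{(\mathsf{i})}\supseteq\tilde\delta_\star$ and $\gamma<1/(2\rho)$ — which give $L_{\delta^{(\mathsf{i})}}\succeq L_{\tilde\delta_\star}\succeq I_n$, hence the exponential ratio is $\leq1$ — the prior ratio $\omega_{\tilde\delta_\star}/\omega_{\delta^{(\mathsf{i})}}=p^{(u+1)\textsf{FP}}$ from \eqref{choice:q}, and the rank/operator-norm bound $\det(L_{\delta^{(\mathsf{i})}})/\det(L_{\tilde\delta_\star})\leq(1+n\textsf{FP}/(\sigma^2\rho))^{\textsf{FP}}$ (each $L_{\tilde\delta_\star}^{-1/2}X_jX_j'L_{\tilde\delta_\star}^{-1/2}$ has norm $\leq\|X_j\|_2^2=n$ and there are $\textsf{FP}$ of them), I obtain $B\leq 2\,p^{(u+1)\textsf{FP}}(1+n\textsf{FP}/(\sigma^2\rho))^{\textsf{FP}/2}$. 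Then apply Lemma~\ref{lem:key} with $\zeta=\zeta_0^2/(2B^2)\in(0,1/2)$: since $\textsf{Var}_\pi(f_0)\leq\|f_0\|_{\infty,\pi}^2\leq B^2$ and $\zeta\|f_0\|_\star^2\leq\zeta_0^2/2$, it yields $\|\nu_0K^n-\pi\|_\tv^2\leq B^2(1-\textsf{SpecGap}_\zeta(K))^n+\zeta_0^2/2$, so $\|\nu_0K^N-\pi\|_\tv\leq\zeta_0$ whenever $N\geq\log(2B^2/\zeta_0^2)/\textsf{SpecGap}_\zeta(K)$.

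\emph{Step 2 (lower bounding $\textsf{SpecGap}_\zeta(K)$ and assembling).} I would take $\I_0=\mathcal{D}_k$ in Theorem~\ref{thm:mixing}, and for each $\delta\in\mathcal{D}_k$ let $\cB_\delta=\{\theta:(\theta-m_\delta)'\Sigma_\delta^{-1}(\theta-m_\delta)\leq\sigma^2 r^2\}$ with $r^2\asymp p+\log(1/\zeta)$ chosen (by a $\chi^2_p$ tail bound) so that $\pi_\delta(\cB_\delta^c)\leq\zeta/20$; since \eqref{cond:v} forces $\log(1/\zeta)=\log(2B^2/\zeta_0^2)\leq\tfrac{u(k+1)}{2}\log p$, this $r$ is polynomial in $p$. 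With $\bar\cB=\cup_{\delta\in\mathcal{D}_k}\{\delta\}\times\cB_\delta$ one has $\bar\pi(\bar\cB^c)\leq\Pi(\mathcal{D}_k^c\vert z)+\max_\delta\pi_\delta(\cB_\delta^c)\leq 4p^{-u(k+1)/2}+\zeta/20$, and \eqref{cond:v} is exactly what makes $4p^{-u(k+1)/2}$ small enough (after substituting the bound on $B$ and $\zeta=\zeta_0^2/(2B^2)$, using $(\zeta/10)^{1+2/(m-2)}=\zeta/10$ for $m=\infty$) that $\bar\pi(\bar\cB)\geq1-\zeta/10$. I would then put an edge between $\delta,\delta'\in\mathcal{D}_k$ when one is obtained from the other by toggling a single index; each element of $\mathcal{D}_k$ is within $k$ such moves of $\tilde\delta_\star$, so $\textsf{D}(\mathcal{D}_k)\leq 2k$. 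Three ingredients remain: (i) $\textsf{SpecGap}_{\cB_\delta}(K_\delta)=1/2$, which is immediate since for $K_\delta(u,\cdot)=\tfrac12\delta_u+\tfrac12\pi_\delta$ the restricted Dirichlet form equals exactly one half of the restricted variance form; (ii) $\pi(\delta)=\Pi(\delta\vert z)\geq\tfrac12 p^{-k(u+1)}(1+nk/(\sigma^2\rho))^{-k/2}$ for $\delta\in\mathcal{D}_k$, by the same Gaussian-marginal-likelihood estimate as in Step 1; and (iii) the edge-overlap constant $\kappa$. For an edge $\delta'=\delta\cup\{j\}$, using $\pi_\delta/\pi_\delta(\cB_\delta)\geq\pi_\delta$ and the mass bounds, $\int_{\cB_\delta\cap\cB_{\delta'}}\min(\pi_\delta/\pi_\delta(\cB_\delta),\pi_{\delta'}/\pi_{\delta'}(\cB_{\delta'}))\,\rmd x\geq\int\min(\pi_\delta,\pi_{\delta'})\,\rmd x-\zeta/10$; the covariances $\sigma^2\Sigma_\delta,\sigma^2\Sigma_{\delta'}$ differ by a rank-one Sherman--Morrison update (the $j$-th prior precision changes from $1/\gamma$ to $\rho$) and the means by $m_\delta-m_{\delta'}\propto\Sigma_{\delta'}e_j$, and bounding the Gaussian affinity $\int\min(\pi_\delta,\pi_{\delta'})\,\rmd x$ below — the determinant factor contributing $\gamma\rho$, and the quadratic form in the mean gap controlled via H\ref{H:tech}, the coherence bound $\Cset(\tilde s_\star+k)$ and the last line of $\e_k$ — gives $\kappa\gtrsim\gamma\rho\,p^{-\frac{1}{2\varrho}\left(s_\star+2\sqrt{1+k}+\frac{\|\tilde\theta_\star\|_1\Cset(\tilde s_\star+k)}{\sigma\sqrt{n\log p}}\right)^2}$ uniformly over edges. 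Plugging (i)--(iii) into Theorem~\ref{thm:mixing} gives
\[\textsf{SpecGap}_\zeta(K)\ \gtrsim\ \frac{\kappa}{k}\,p^{-k(u+1)}\Bigl(1+\frac{nk}{\sigma^2\rho}\Bigr)^{-k/2},\]
and combining with $N\geq\log(2B^2/\zeta_0^2)/\textsf{SpecGap}_\zeta(K)$, the bound on $B$, the bound on $\kappa$, and $\log(2B^2/\zeta_0^2)\leq\tfrac{u(k+1)}{2}\log p$ yields a bound of the form \eqref{cond:it:2}; all leftover polylogarithmic and poly-$k$ factors (including $\textsf{D}(\mathcal{D}_k)$ and $\log(2B^2/\zeta_0^2)$) are dominated by $p^{\frac1{2\varrho}(s_\star+2\sqrt{1+k}+\cdots)^2}\geq p^{2(1+k)/\varrho}$ since $\varrho\in(0,1]$, and get absorbed into a constant $A$ depending only on $u,\varrho,\sigma$, the only surviving explicit factor being $\log(1/\zeta_0)$.

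\emph{Main obstacle.} The delicate step is ingredient (iii): a uniform lower bound on $\kappa$, i.e.\ controlling the total variation between two high-dimensional Gaussians whose covariance can be arbitrarily ill-conditioned when $\rho,\gamma$ are small, and whose mean gap is driven by the posterior means of the false-positive coordinates — quantities bounded only by invoking H\ref{H:tech} (to control $X_{\delta^c}'L_\delta^{-1}X_{\delta^c}$ on the relevant low-dimensional subspace), the design coherence $\Cset(\tilde s_\star+k)$, and the signal size $\|\tilde\theta_\star\|_1$. This is exactly the source of the factor $p^{\frac{1}{2\varrho}\left(s_\star+2\sqrt{1+k}+\frac{\|\tilde\theta_\star\|_1\Cset(\tilde s_\star+k)}{\sigma\sqrt{n\log p}}\right)^2}$ in \eqref{cond:it:2}.
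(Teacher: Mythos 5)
Your overall architecture coincides with the paper's: warm start via Lemma~\ref{lem:key} with $\|\cdot\|_\star=\|\cdot\|_{\infty,\pi}$ and $B=1/\Pi(\delta^{(\mathsf{i})}\vert z)$ bounded exactly as you do, then Theorem~\ref{thm:mixing} with $\I_0=\mathcal{D}_k$, the single-flip graph of diameter $\leq 2k$, the posterior-contraction part of $\e_k$ absorbed by \eqref{cond:v}, $\textsf{SpecGap}_{\cB_\delta}(K_\delta)=1/2$, and the same lower bounds on $\kappa$ and on $\min_{\delta\in\mathcal{D}_k}\Pi(\delta\vert z)$. The one step that fails as written is your truncation to ellipsoids $\cB_\delta$. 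The paper simply takes $\cB_\delta=\rset^p$, so that $\pi_\delta(\cB_\delta)=1$, the mass condition on $\bar\cB$ reduces exactly to $\Pi(\mathcal{D}_k\vert z)\geq 1-\zeta/10$, and the overlap integral is the full affinity $\int\min(\pi_\delta,\pi_{\delta'})$. Your ellipsoids force the bound $\kappa\geq\int\min(\pi_\delta,\pi_{\delta'})-\zeta/10$, and that subtraction is not harmless: the affinity you are targeting is of order $\gamma\rho\,p^{-\frac{1}{2\varrho}(s_\star+\cdots)^2}$, which for moderate $s_\star$ and small $\varrho$ is far smaller than $\zeta/10\asymp \zeta_0^2\,p^{-2(u+1)\textsf{FP}}(1+n\textsf{FP}/(\sigma^2\rho))^{-\textsf{FP}}$, so your stated lower bound on $\kappa$ can be negative. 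This is repairable (inflate $r^2$ by $\tfrac{1}{2\varrho}(\cdots)^2\log p+\log(1/(\gamma\rho))$ so the tail mass is negligible relative to the affinity, or better, drop the truncation altogether), and relatedly your requirement $4p^{-u(k+1)/2}\leq\zeta/20$ is a factor $2$ stronger than what \eqref{cond:v} literally provides; but as written there is a gap.

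On the crux, ingredient (iii): you correctly identify the affinity bound as the delicate step and name the right ingredients (H\ref{H:tech} for the $1+\tau n\varrho$ denominator, the coherence, the noise term from $\e_k$), but you leave it as a sketch of a two-Gaussian total-variation computation involving $m_\delta-m_{\delta'}$ and a Sherman--Morrison update of the covariances. The paper's route avoids this entirely: for an edge $\delta^{(2)}\supseteq\delta^{(1)}$, condition \eqref{rho:lm} gives $D_{(\delta^{(2)})}^{-1}\preceq D_{(\delta^{(1)})}^{-1}$, so the unnormalized conditional densities are pointwise ordered and $\Pi(\theta\vert\delta^{(2)},z)/\Pi(\theta\vert\delta^{(1)},z)$ is bounded below, for every $\theta$, by the ratio of normalizing constants; by \eqref{eq2:control:term1} and \eqref{eq21:control:term1} that ratio is at least $\gamma\rho\exp\bigl(-\tau(X_j'L_{\delta^{(1)}}^{-1}z)^2/(2\sigma^2(1+\tau n\varrho))\bigr)$, and $|X_j'L_{\delta^{(1)}}^{-1}z|$ is then controlled using $\e_k$, $\Cset(\tilde s_\star+k)$ and \eqref{eq:norm:X}. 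A pointwise density-ratio bound immediately lower-bounds $\int\min(\pi_{\delta^{(1)}},\pi_{\delta^{(2)}})$, with no need to compare means or covariances of the two Gaussians. I recommend replacing your ingredient (iii) by this argument; the rest of your proposal then matches the paper's proof.
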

\begin{proof}
See Section \ref{sec:proof:thm:mixing:lm}.
\end{proof}

\subsection{Discussion}
Since we impose $k\leq s_0$, the condition (\ref{cond:v}) basically says  that the number of false-positives of $\delta^{(\textsf{i})}$ cannot be too large. Hence the main conclusion of Theorem \ref{thm:mixing:lm} is that  Algorithm \ref{algo1} has a polynomial mixing time  if posterior contraction holds ($z\in\e_k$), and the number of initial false-positives $\textsf{FP}$ is not too large --  the idea of warm-start. In contrast, the mixing time predicted by the standard spectral gap scales with $p$ as $O(p^p)$. This follows simply by plugging the lower bound (\ref{mino:pi}) in (\ref{bound:madras}). 

One of the first paper that analyzes the mixing times of MCMC algorithm in high-dimensional linear regression models  and highlights fast/slow mixing behaviors is \cite{yang:etal:15}. These authors takes a worst-case scenario approach\footnote{they look at the worst mixing time achievable by changing the initial distribution}, and show that in general their Gibbs sampler has a mixing time that is exponential in $p$ unless the state space is restricted to  only models $\delta$ for which $\|\delta\|_0\leq s_0$ for some threshold $s_0$. We note however that correctly choosing such threshold  $s_0$ in practice may be complicated. In contrast Theorem \ref{thm:mixing:lm} shows that without restricting the state space one can still achieve polynomial mixing time by warm-starting the algorithm. The idea of warm-start is a well-known  strategy to accelerate mixing times in MCMC computation (see e.g. \cite{lovasz:vempala07}).

 We note from (\ref{cond:v}) that the power $k$ that appears in  (\ref{cond:it:2}) grows with $\textsf{FP}$. This suggests that the mixing time of the algorithm can rapidly deteriorate as  $\textsf{FP}$ grows. It is unclear whether the precise dependence on $p$ thus expressed in (\ref{cond:it:2}) is tight.  In any case, we  did observe in the simulations a sharp increase in the mixing time of the algorithm as  \textsf{FP} increases, which seems consistent with (\ref{cond:it:2}).

With respect to the initialization, the natural question  is how the  mixing time behaves if $\delta^{(\mathsf{i})}$ admits false-negatives.  Our method is  not adapted to provide an answer to this question.  Nonetheless to gain some intuition, we perform some numerical simulations which seem to suggest that the polynomial mixing time obtained in Theorem \ref{thm:mixing:lm} no longer hold if $\delta^{(\mathsf{i})}$ has false-negatives.  

The bound in (\ref{cond:it:2}) highlights the effect of the coherence of the matrix $X$.  In general $\Cset(s)$ grows with $p$ as $\sqrt{\log(p)}$, which cancels with the same term in the denominator. However if there are strong correlations among some of the columns of $X$, then $\Cset(s)$ typically  grows with $n$ faster than $\sqrt{n}$, which can significantly impacts the mixing time. For instance if $n\gtrsim s\log(p)$ as assumed above, and $\Cset(s) \approx n$, then the resulting mixing time grows with $p$  faster than exponential.

Theorem \ref{thm:mixing:lm}  has also some obvious implications on how to initialize the chain. It suggests that the  initialization strategy sometimes used in practice where $\delta^{(\mathsf{i})}$ is taken as the zero vector is sub-optimal, and might result in Markov chain with exponential mixing times. Instead,  our result suggests a warm-start initialization where $\delta^{(\mathsf{i})}$ is taken for instance as the support of the lasso estimate -- or some other similarly-behaved frequentist estimate.

\subsection{Numerical illustrations}\label{sec:sim}
We illustrate some of the conclusions with the following simulation study. We consider a linear regression model with Gaussian noise $\textbf{N}(0,\sigma^2)$, where $\sigma^2$ is set to $1$. We experiment with sample size $n=p/10$, and dimension $p\in \{500, 1000, 2000, 3000,4000\}$.  We take $X\in\rset^{n\times p}$ as a random matrix with i.i.d. standard Gaussian entries. We fix the number of non-zero coefficients to $s_\star=10$, and $\delta_\star$ is given by
\[\delta_\star = (\underbrace{1,\ldots,1}_{10},\;\underbrace{0,\ldots,0}_{p-10}).\]
The non-zero coefficients of $\theta_\star$ are uniformly drawn from $(-a-1,-a)\cup (a,a+1)$, where 
\[a = 4\sqrt{\frac{\log(p)}{n}}.\] 

We use the following prior parameters values: 
\[ u = 1,\; \; \rho = \frac{1}{\sqrt{n}},\;\; \gamma = \frac{0.1\sigma^2}{\lambda_{\textsf{max}}(X'X)}.\]
We use an initial distribution $\nu_0=\Pi(\cdot\vert \delta^{(\mathsf{i})},z)$, where we vary the number of false-positives of $\delta^{(\textsf{i})}$. To monitor the mixing, we compute the sensitivity and the precision at iteration $k$ as
\begin{multline*}
\textsf{SEN}_k = \frac{1}{s_\star}\sum_{j=1}^p \textbf{1}_{\{|\delta_{k,j}|>0\}}\textbf{1}_{\{|\delta_{\star,j}|>0\}},\; \textsf{PREC}_k=\frac{\sum_{j=1}^p \textbf{1}_{\{|\delta_{k,j}|>0\}}\textbf{1}_{\{|\delta_{\star,j}|>0\}}}{\sum_{j=1}^p \textbf{1}_{\{|\delta_{k,j}|>0\}}}. \end{multline*}
We empirically measure the mixing time of the algorithm as the first time $k$ where both $\textsf{SEN}_k$ and $\textsf{PREC}_k$ reach $1$, truncated to $2\times 10^4$ -- that is we stop any run that has not mixed after $20000$ iterations. The average empirical mixing time thus obtained (based on on $50$ independent MCMC replications) are presented in Table \ref{table:1} and Figure \ref{fig:1}. These estimates are consistent with our results. They show only a modest increase in mixing time as $p$ increases, but a sharp increase in mixing time as the number of false-positives increases. We also explore the behavior of the sampler in the presence of false-negatives in the initialization. More specifically we consider the case where $\delta^{(\textsf{i})}$ has 2 false-negatives, but no false-positive. In this setting, and for all 50 replications, the sampler fails to recover all $10$ significant components within $20,000$ iterations. 
\medskip

\begin{table}[h]
\begin{center}
\small
\scalebox{.9}{\begin{tabular}{llllll}
\hline
 & $p=500$ &  $p=1000$ &   $p=2000$ &  $p=3000$ & $p=4000$ \\
\hline
$\textsf{FP}=1\%$ & 15.7 (21.1) & 71.6 (280.3) & 43.5 (45.6) &  42.8 (47.5) & 65.4 (100.6)\\
$\textsf{FP}=5\%$& 93.8 (247.8) & 93.5 (102.3) & 130.8 (164.9) & 186.9 (303.9) & 225.9 (239.5)\\
$\textsf{FP}=10\%$ &  $>$11325.6  & $>$7916.3 & $>$8955.0  & $>$10648.0 & $>$12113.4 \\
$\textsf{FP}=20\%$ &  $>$20000  & $>$20000 & $>$20000  & $>$20000 & $>$20000\\
$\textsf{FP}=0$, $\textsf{FN}=2$ &  $>$20000  & $>$20000 & $>$20000  & $>$20000 & $>$20000\\
\hline
\end{tabular}}
\caption{\small{Table showing the average empirical mixing time of the sampler. Based on 50 simulation replications. The numbers in parenthesis are standard errors. The notation $>a$ means that some (or all) of the replicated mixing times have been truncated to $20,000$.
}}\label{table:1}
\end{center}
\end{table}

\medskip

\begin{center}

\begin{figure}[h!]
\centering
\includegraphics[scale=0.5]{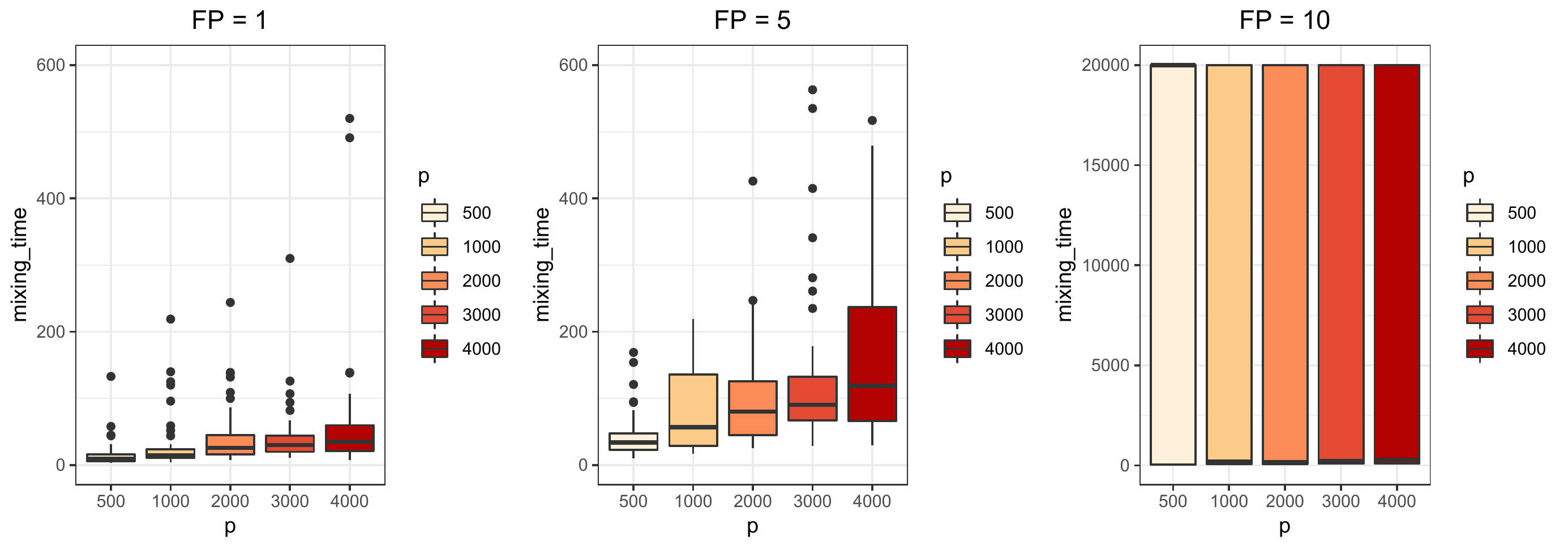}
\vspace{0.0cm}\caption{{\small{Boxplots of the average empirical mixing times. Based on 50 simulation replications. $\textsf{FP}=x$ means there are $p\times x/100$ false-positives.} }}
\label{fig:1}
\end{figure}

\end{center}

%

\section{Proofs}
\subsection{Proof Lemma \ref{lem:key}}\label{sec:proof:lem:key}
We first note that if a probability measure  $\nu$ is absolutely continuous with respect to $\pi$ with Radon-Nikodym  derivative $f_{\nu}$, then for any $A\in\B$,
\begin{eqnarray*}
\nu K(A) & = & \int\nu(\rmd x)K(x,A) =  \int \int f_\nu(x)\textbf{1}_A(y)\pi(\rmd x)K(x,\rmd y)\\
&=& \int \int \textbf{1}_A(x)f_\nu(y)\pi(\rmd x)K(x,\rmd y) =  \int_A \pi(\rmd x) \int K(x,\rmd y) f_\nu(y),
\end{eqnarray*}
where the third equality uses the reversibility of $K$. This calculation says that $\nu K$ is also absolutely continuous with respect to $\pi$ with Radon-Nikodym  derivative $x\mapsto K f_\nu(x)\eqdef \int K(x,\rmd y) f_\nu(y)$.  More generally $\frac{\rmd (\nu K^n)}{\rmd \pi}(\cdot) = K^nf_\nu(\cdot)$, and 
\begin{eqnarray}\label{keylem:eq1}
\|\nu K^n -\pi\|_\tv^2 & = & \left(\int \left|\frac{\rmd (\nu K^n)}{\rmd \pi}(x) - 1\right|\pi(\rmd x)\right)^2 \nonumber\\
& =& \left(\int \left|K^nf_\nu(x) - 1\right|\pi(\rmd x)\right)^2\nonumber \\
& \leq & \| K^nf_\nu-1\|_{2,\pi}^2 \nonumber\\
& = & \textsf{Var}( K^nf_\nu).\end{eqnarray}
Take $f\in L^2(\pi)$. Since $\pi(f) = \pi(Kf)$, we have
\begin{multline}\label{keylem:eq11}
\textsf{Var}(Kf) - \textsf{Var}(f) = \pscal{Kf}{Kf}_\pi - \pscal{f}{f}_\pi \\
= -\frac{1}{2}\int\int\left(f(y)-f(x)\right)^2\pi(\rmd x) K^2(x,\rmd y),\end{multline}
where the last equality exploits the reversibility of $K$. For any function $f\in L^2(\pi)$,
\begin{multline*}
\int\pi(\rmd x) \int K^2(x,\rmd y) (f(y)-f(x))^2 = \int\pi(\rmd x)\int_{\Xset} K(x,\rmd x_1) \int K(x_1,\rmd y)(f(y)-f(x))^2\\
= \int\pi(\rmd x)\int_{\{x\}} K(x,\rmd x_1) \int K(x_1,\rmd y)(f(y)-f(x))^2  \\
+ \int\pi(\rmd x)\int_{\Xset\setminus \{x\}} K(x,\rmd x_1) \int K(x_1,\rmd y)(f(y)-f(x))^2.
\end{multline*}
By the lazyness of the chain, the first term on the right hand side of the last display is bounded from below by
\[\frac{1}{2}\int\pi(\rmd x)\int K(x,\rmd y)(f(y)-f(x))^2,\]
whereas the second term  is bounded from below by
\begin{multline*}
\int\pi(\rmd x)\int_{\Xset\setminus \{x\}} K(x,\rmd x_1) \int_{\{x_1\}} K(x_1,\rmd y)(f(y)-f(x))^2\\
\geq \frac{1}{2} \int\pi(\rmd x)\int_{\Xset\setminus \{x\}} K(x,\rmd x_1)(f(x_1)-f(x))^2 = \frac{1}{2} \int\pi(\rmd x)\int K(x,\rmd x_1)(f(x_1)-f(x))^2.
\end{multline*}
Hence, for all $f\in L^2(\pi)$,
\[
\int\int\left(f(y)-f(x)\right)^2\pi(\rmd x) K^2(x,\rmd y) \geq  \int\int\left(f(y)-f(x)\right)^2\pi(\rmd x) K(x,\rmd y).\]
Using the last display together with (\ref{keylem:eq11}),  and the definition of $\e(f,f)$, we conclude that for all $f\in L^2(\pi)$,
\begin{equation}\label{keylem:eq2}
\textsf{Var}(Kf) \leq \textsf{Var}(f)  -\e(f,f).
\end{equation}
Fix $\zeta\in (0,1)$, and take $f\in L^2(\pi)$. If $\textsf{Var}(f)\leq \zeta\|f\|_\star^2$, then, by (\ref{keylem:eq2}),
\begin{multline*}
\textsf{Var}(Kf) \leq \textsf{Var}(f) \leq \zeta\|f\|_\star^2 \\
 = \left(1-\textsf{SpecGap}_\zeta(K)\right) \max\left(\textsf{Var}(f),\zeta\|f\|_\star^2\right) + \textsf{SpecGap}_\zeta(K)\zeta\|f\|_\star^2.
 \end{multline*}
 But if $\textsf{Var}(f) > \zeta\|f\|_\star^2>0$, then by (\ref{keylem:eq2}), 
\begin{eqnarray*}
\textsf{Var}(Kf) & = & \|f\|_\star^2\textsf{Var}\left(K\left(\frac{f}{\|f\|_\star}\right) \right) \\
 & \leq &  \|f\|_\star^2\left(\textsf{Var}\left(\frac{f}{\|f\|_\star}\right) - \e\left(\frac{f}{\|f\|_\star},\frac{f}{\|f\|_\star}\right)\right)\\
& \leq & \textsf{Var}(f) - \|f\|_\star^2\textsf{SpecGap}_\zeta(K)\left(\textsf{Var}\left(\frac{f}{\|f\|_\star}\right)  - \frac{\zeta}{2}\right),\\
& = & \textsf{Var}(f)\left(1 - \textsf{SpecGap}_\zeta(K)\right) + \frac{\zeta}{2}\|f\|_\star^2\textsf{SpecGap}_\zeta(K).
\end{eqnarray*}
Clearly the last display (which is derived assuming that $\|f\|_\star>0$)  continues to hold if $\|f\|_\star=0$. We conclude that for all $f\in L^2(\pi)$, 
\[\textsf{Var}(Kf) \leq \max\left(\textsf{Var}(f),\zeta\|f\|_\star^2\right) \left(1 - \textsf{SpecGap}_\zeta(K)\right) + \zeta\|f\|_\star^2\textsf{SpecGap}_\zeta(K).\]
Since $\|Kf\|_\star\leq \|f\|_\star$, it follows that for all $f\in L^2(\pi)$
\begin{multline}\label{keylem:eq3}
\max\left(\textsf{Var}(Kf),\zeta\|K f\|_\star^2\right)  \\
\leq \max\left(\textsf{Var}(f),\zeta\|f\|_\star^2\right) \left(1 - \textsf{SpecGap}_\zeta(K)\right) 
+ \zeta\|f\|_\star^2\textsf{SpecGap}_\zeta(K).\end{multline}
We can iterate the above inequality to deduce that for all $f\in L^2(\pi)$, such that $\|f\|_\star<\infty$, and for all $n\geq 1$,
\begin{multline*}
\max\left(\textsf{Var}(K^nf),\zeta\|K^n f\|_\star^2\right) \leq  \max\left(\textsf{Var}(f),\zeta\|f\|_\star^2\right)\left(1 - \textsf{SpecGap}_\zeta(K)\right)^n \\
+ \zeta\textsf{SpecGap}_\zeta(K)\sum_{j\geq 0} \left(1 - \textsf{SpecGap}_\zeta(K)\right)^j\|K^{n-j-1}f\|_\star^2\\
\leq \max\left(\textsf{Var}(f),\zeta\|f\|_\star^2\right)\left(1 - \textsf{SpecGap}_\zeta(K)\right)^n + \zeta\|f\|_\star^2.
\end{multline*}
Now, if $\pi_0= f_0 \pi$, the last display combined with (\ref{keylem:eq1}) implies that
\begin{multline*}
\|\pi_0 K^n -\pi\|_\tv^2 \leq \max\left(\textsf{Var}(K^nf_0),\zeta\|K^n f_0\|_\star^2\right) \\
\leq  \max\left(\textsf{Var}(f_0),\zeta\|f_0\|_\star^2\right)\left(1 - \textsf{SpecGap}_\zeta(K)\right)^n + \zeta\|f_0\|_\star^2.\end{multline*}
This ends the proof.
\vspace{-0.6cm}
\begin{flushright}
$\square$
\end{flushright}

\medskip
\subsection{Proof Lemma \ref{lem:useful}}\label{sec:proof:lem:useful}
Take $f:\;\Xset\to\rset$ such that $\textsf{Var}_\pi(f)>\zeta$, and $\|f\|_\star = \|f\|_{m,\pi}=1$. We have
\begin{multline*}
2\textsf{Var}_\pi(f) = \int_{\Xset_\zeta}\int_{\Xset_\zeta}(f(y)-f(x))^2\pi(\rmd x)\pi(\rmd y)  \\
+ 2\int_{\Xset_\zeta}\int_{\Xset\setminus\Xset_\zeta}(f(y)-f(x))^2\pi(\rmd x)\pi(\rmd y)
+ \int_{\Xset\setminus \Xset_\zeta}\int_{\Xset\setminus \Xset_\zeta}(f(y)-f(x))^2\pi(\rmd x)\pi(\rmd y).\end{multline*}
Using the convexity inequality $(a+b)^2\leq 2a^2 +2b^2$, and  Holder's inequality,
\begin{multline*}
\int_{\Xset_\zeta}\int_{\Xset\setminus\Xset_\zeta}(f(y)-f(x))^2 \pi(\rmd x)\pi(\rmd y)\\
 \leq 2 \pi(\Xset_\zeta)\int_{\Xset\setminus\Xset_\zeta} f(x)^2\pi(\rmd x) + 2\pi(\Xset\setminus\Xset_\zeta)\int_{\Xset_\zeta}f(x)^2\pi(\rmd x)\\
\leq 2 \pi(\Xset_\zeta) \pi(\Xset\setminus\Xset_\zeta)^{1-\frac{2}{m}}\|f\|_{m,\pi}^2 +  2\pi(\Xset\setminus\Xset_\zeta)\|f\|_{m,\pi}^2 \\
\leq 4\pi(\Xset\setminus\Xset_\zeta)^{1-\frac{2}{m}}. \end{multline*}
With similar calculation,
\[\int_{\Xset\setminus \Xset_\zeta}\int_{\Xset\setminus \Xset_\zeta}(f(y)-f(x))^2 \pi(\rmd x)\pi(\rmd y) \leq 4 \pi(\Xset\setminus\Xset_\zeta) \pi(\Xset\setminus\Xset_\zeta)^{1-\frac{2}{m}} \leq 2\pi(\Xset\setminus\Xset_\zeta)^{1-\frac{2}{m}}.\]
Using $\pi(\Xset_\zeta)\geq (\zeta/5)^{1+ 2/(m-2)}$, we get \[2(\textsf{Var}_\pi(f)-\frac{\zeta}{2}) \geq \int_{\Xset_\zeta}\int_{\Xset_\zeta}\pi(\rmd x)\pi(\rmd y)(f(y)-f(x))^2.\]
Hence
\[\frac{\e(f,f)}{\textsf{Var}_\pi(f)-\frac{\zeta}{2}} \geq \frac{\int_{\Xset_\zeta}\int_{\Xset_\zeta}\pi(\rmd x)K(x,\rmd y)(f(y)-f(x))^2}{\int_{\Xset_\zeta}\int_{\Xset_\zeta}\pi(\rmd x)\pi(\rmd y)(f(y)-f(x))^2}\geq \textsf{SpecGap}_{\Xset_\zeta}.\]
This ends the proof.
\vspace{-0.6cm}
\begin{flushright}
$\square$
\end{flushright}

\medskip

\subsection{Proof Theorem \ref{thm:mixing}}\label{sec:proof:thm:mixing}
The proof of the theorem is similar to the proof of Lemma \ref{lem:useful}.  But first, we need the following lemma.
\begin{lemma}\label{mixing:lem2}
Let $\nu(\rmd x) =f_\nu(x)\rmd x$, $\mu(\rmd x) = f_\mu(x) \rmd x$ be two probability measures on some measurable space with reference measure $\rmd x$, such that $\int \min(f_\mu(x),f_\nu(x))\rmd x>\epsilon$ for some $\epsilon>0$. Then for any measurable function $h$ such that $\int h^2(x)\nu(\rmd x)<\infty$ and $\int h^2(x)\mu(\rmd x)<\infty$, we have
\begin{multline*}
\int (h(y)-h(x))^2\mu(\rmd y)\nu(\rmd x) \\
\leq \frac{2-\epsilon}{2\epsilon} \left[\int (h(y)-h(x))^2\mu(\rmd y)\mu(\rmd x) + \int (h(y)-h(x))^2\nu(\rmd y)\nu(\rmd x)\right].\end{multline*}
\end{lemma}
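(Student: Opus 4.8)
The plan is to route everything through the overlap of $\mu$ and $\nu$ and exploit bilinearity. For (sub-)probability measures $\alpha,\beta$ on the underlying space with $\int h^2\,\rmd\alpha<\infty$ and $\int h^2\,\rmd\beta<\infty$, write $Q(\alpha,\beta)\eqdef\int\int(h(y)-h(x))^2\,\alpha(\rmd y)\,\beta(\rmd x)$; this quantity is nonnegative, symmetric, and additive in each argument. The claim to prove is $Q(\mu,\nu)\leq\frac{2-\epsilon}{2\epsilon}\big(Q(\mu,\mu)+Q(\nu,\nu)\big)$. I would introduce the overlap measure $\lambda(\rmd x)\eqdef\min(f_\mu(x),f_\nu(x))\,\rmd x$, set $c\eqdef\lambda(\Xset)\in(\epsilon,1]$, and decompose $\mu=\lambda+\tilde\mu$, $\nu=\lambda+\tilde\nu$ with $\tilde\mu,\tilde\nu\geq 0$ of total mass $1-c$ each. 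Since $\lambda\leq\mu$ and $\lambda\leq\nu$ as measures, all the $Q$'s appearing below are finite and Fubini applies. If $c=1$ then $\mu=\nu=\lambda$ and the inequality is trivial since $\frac{2-\epsilon}{2\epsilon}\geq\frac12$; so assume $c<1$ and put $\bar\lambda\eqdef\lambda/c$.

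The only step requiring an idea is the bound on the cross term $Q(\tilde\mu,\tilde\nu)$. Inserting a ``mediator'' variable $z$ with law $\bar\lambda$ and using $(h(y)-h(x))^2\leq 2(h(y)-h(z))^2+2(h(z)-h(x))^2$, then integrating out $\tilde\mu$ and $\tilde\nu$ (each of mass $1-c$) and using $\lambda=c\bar\lambda$, gives $Q(\tilde\mu,\tilde\nu)\leq\frac{2(1-c)}{c}\big(Q(\lambda,\tilde\mu)+Q(\lambda,\tilde\nu)\big)$. Expanding $Q(\mu,\nu)=Q(\lambda,\lambda)+Q(\lambda,\tilde\nu)+Q(\tilde\mu,\lambda)+Q(\tilde\mu,\tilde\nu)$ by additivity and substituting this estimate yields, after collecting coefficients ($1+\tfrac{2(1-c)}{c}=\tfrac{2-c}{c}$), the bound $Q(\mu,\nu)\leq Q(\lambda,\lambda)+\frac{2-c}{c}\big(Q(\lambda,\tilde\mu)+Q(\lambda,\tilde\nu)\big)$.

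For the other direction I would expand $Q(\mu,\mu)=Q(\lambda,\lambda)+2Q(\lambda,\tilde\mu)+Q(\tilde\mu,\tilde\mu)\geq Q(\lambda,\lambda)+2Q(\lambda,\tilde\mu)$ and likewise for $\nu$, so that $Q(\mu,\mu)+Q(\nu,\nu)\geq 2\big(Q(\lambda,\lambda)+Q(\lambda,\tilde\mu)+Q(\lambda,\tilde\nu)\big)$. Since $c\leq 1$ we have $1\leq\frac{2-c}{c}$, hence the bound on $Q(\mu,\nu)$ is at most $\frac{2-c}{c}\big(Q(\lambda,\lambda)+Q(\lambda,\tilde\mu)+Q(\lambda,\tilde\nu)\big)\leq\frac{2-c}{2c}\big(Q(\mu,\mu)+Q(\nu,\nu)\big)$; finally $\frac{2-c}{2c}\leq\frac{2-\epsilon}{2\epsilon}$ because $x\mapsto\frac{2-x}{2x}$ is decreasing and $c>\epsilon$, which concludes. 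The thing to be careful about is exactly this bookkeeping: applying $(a-b)^2\le 2(a-z)^2+2(z-b)^2$ naively to $\mu$ and $\nu$ directly only produces the weaker constant $2/\epsilon$, whereas routing the argument through the overlap measure $\lambda$ makes the excess-mass cross terms $Q(\lambda,\tilde\mu)$ and $Q(\lambda,\tilde\nu)$ cancel and leaves only the harmless inequality $c\leq 2-c$.
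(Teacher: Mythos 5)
Your proof is correct, and it is a genuine addition here: the paper does not prove this lemma at all, but simply points to inequality (47) in the proof of Theorem 1.2 of Madras and Randall (2002). Your argument is the natural self-contained one, and it recovers the sharp constant. The bookkeeping checks out: with $Q(\alpha,\beta)\eqdef\int\int(h(y)-h(x))^2\alpha(\rmd y)\beta(\rmd x)$, all terms are finite since $(h(y)-h(x))^2\le 2h(y)^2+2h(x)^2$ and $h\in L^2(\mu)\cap L^2(\nu)$, the mediator step gives $Q(\tilde\mu,\tilde\nu)\le \frac{2(1-c)}{c}\bigl(Q(\lambda,\tilde\mu)+Q(\lambda,\tilde\nu)\bigr)$ exactly as you state, the coefficient $1+\frac{2(1-c)}{c}=\frac{2-c}{c}$ is right, and since $\frac{2-c}{c}\ge 1$ you may absorb the $Q(\lambda,\lambda)$ term and compare against $\frac{1}{2}\bigl(Q(\mu,\mu)+Q(\nu,\nu)\bigr)\ge Q(\lambda,\lambda)+Q(\lambda,\tilde\mu)+Q(\lambda,\tilde\nu)$; monotonicity of $c\mapsto\frac{2-c}{2c}$ then yields $\frac{2-\epsilon}{2\epsilon}$. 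The degenerate case $c=1$ forces $\mu=\nu$ and is handled by $\epsilon<c\le1$, as you note. Your closing remark is also apt: the key point, both in your write-up and in the cited Madras--Randall argument, is that the mediator must be the normalized overlap measure $\bar\lambda=\lambda/c$ rather than $\mu$ or $\nu$ themselves, since only then do the cross terms $Q(\lambda,\tilde\mu)$, $Q(\lambda,\tilde\nu)$ match the ones produced by expanding $Q(\mu,\mu)+Q(\nu,\nu)$, which is what delivers the constant $\frac{2-\epsilon}{2\epsilon}$ instead of something cruder.
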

\begin{proof}
This result is established as part of the proof of Theorem 1.2 of \cite{madras:randall:02} (see inequality (47)). 
\end{proof}
Choose $f\in L^2(\pi)$ such that $\|f\|_{m,\pi}=1$ . Given $i\in\I$, we set 
\[\e_i(f,f)\eqdef \frac{1}{2}\int_{\cB_i}\int_{\cB_i}\left(f(y)-f(x)\right)^2\pi_i(\rmd x)K_i(x,\rmd y).\]
By the definition of $\textsf{SpecGap}_{\cB_i}(K_i)$, we have
\begin{equation}\label{proof:thm:mixing:2:eq1}
\e_i(f,f) \geq \frac{1}{2} \textsf{SpecGap}_{\cB_i}(K_i) \int_{\cB_i}\int_{\cB_i}\left(f(y)-f(x)\right)^2\pi_i(\rmd x)\pi_i(\rmd y).\end{equation}

By Fubini's theorem, and using (\ref{proof:thm:mixing:2:eq1}), we have
\begin{eqnarray}\label{proof:thm:mixing:2:eq2}
2\e(f,f) &=& \int_{\Xset}\pi(\rmd x)\left\{\sum_{i\in\I} \pi_x(i) \int_{\Xset} K_i(x,\rmd y) \right\}\left(f(y)-f(x)\right)^2 \nonumber\\
& =& \sum_{i\in\I}\pi(i)\int_{\Xset}\int_{\Xset}\left(f(y)-f(x)\right)^2\pi_i(\rmd x)K_i(x,\rmd y) \nonumber\\
& \geq & 2\sum_{i\in\I_0}   \pi(i) \e_i(f,f) \nonumber\\
& \geq & \sum_{i\in\I_0} \pi(i) \textsf{SpecGap}_{\cB_i}(K_i) \int_{\cB_i}\int_{\cB_i}\left(f(y)-f(x)\right)^2\pi_i(\rmd x)\pi_i(\rmd y) \nonumber\\
& \geq & \min_{i\in\I_0}\left\{\pi(i)\pi_i(\cB_i)\textsf{SpecGap}_{\cB_i}(K_i)\right\}\nonumber\\
&& \times \sum_{i\in\I_0} \frac{1}{\pi_i(\cB_i)}\int_{\cB_i}\int_{\cB_i}\left(f(y)-f(x)\right)^2\pi_i(\rmd x)\pi_i(\rmd y)  .\end{eqnarray}

Using $\bar \cB = \cup_{i\in\I_0}\{i\}\times \cB_i$, and $\bar\cB^c\eqdef (\I\times \Xset)\setminus \bar\cB$, we write
\begin{eqnarray*}
2\textsf{Var}_\pi(f) & = & \int_{\I\times \Xset}\int_{\I\times \Xset}\left(f(y)-f(x)\right)^2\bar\pi(i,\rmd x)\bar \pi(j,\rmd y) \\
& \leq &  \int_{\bar\cB}\int_{\bar \cB}\left(f(y)-f(x)\right)^2\bar\pi(i,\rmd x)\bar \pi(j,\rmd y) + 10 \bar\pi(\bar \cB^c)^{1-\frac{2}{m}},
\end{eqnarray*}
by using similar calculations as in Lemma \ref{lem:useful}.
And since $\bar\cB$ is such that $5\bar\pi(\bar \cB^c)^{1-\frac{2}{m}}\leq \zeta$, we conclude that
\begin{eqnarray}\label{proof:thm:mixing:2:eq3}
2\left(\textsf{Var}_\pi(f) -\zeta\right) & \leq & \int_{\bar\cB}\int_{\bar \cB}\left(f(y)-f(x)\right)^2\bar\pi(i,\rmd x)\bar \pi(j,\rmd y),\nonumber\\
 & = & \sum_{i\in\I_0}\sum_{j\in \I_0} \pi(i)\pi(j) \int_{\cB_i}\int_{\cB_j}\left(f(y)-f(x)\right)^2\pi_i(\rmd x)\pi_j(\rmd y).\nonumber\\
 & = & \sum_{i\in\I_0}\sum_{j\in \I_0} \pi(i)\pi_i(\cB_i)\pi(j)\pi_j(\cB_j)\int_{\cB_i}\int_{\cB_j}\left(f(y)-f(x)\right)^2\frac{\pi_i(\rmd x)}{\pi_i(\cB_i)}\frac{\pi_j(\rmd y)}{\pi_j(\cB_j)}\nonumber\\
 \end{eqnarray}
For $i,j\in\I_0$, let us write $(i,j)$ to denote the path from $i$ to $j$, and given an edge $e$, let us write $e$ as $(e_1,e_2)$ where $e_1$ and $e_2$ denote the incident nodes of $e$. By the Cauchy-Schwarz inequality,
\begin{multline*}
\int_{\cB_i}\int_{\cB_j}\left(f(y)-f(x)\right)^2\frac{\pi_i(\rmd x)}{\pi_i(\cB_i)}\frac{\pi_j(\rmd y)}{\pi_j(\cB_j)} \leq \sum_{e\in(i,j)} \frac{1}{\min\left(\pi_{e_1}(\cB_{e_1}), \pi_{e_2}(\cB_{e_2})\right)} \\
\times \sum_{e\in (i,j)}\min\left(\pi_{e_1}(\cB_{e_1}), \pi_{e_2}(\cB_{e_2})\right) \int_{\cB_{e_1}}\int_{\cB_{e_2}} \left(f(y)-f(x)\right)^2\frac{\pi_{e_1}(\rmd x)}{\pi_{e_1}(\cB_{e_1})}\frac{\pi_{e_2}(\rmd y)}{\pi_{e_2}(\cB_{e_2})}.
\end{multline*}
By Lemma \ref{mixing:lem2}, integral on the right-hand side of the last display is upper bounded by
\begin{multline*}
\left(\frac{2-\kappa}{2\kappa}\right) \frac{1}{\pi_{e_1}(\cB_{e_1})^2}\int_{\cB_{e_1}}\int_{\cB_{e_1}}\left(f(y)-f(x)\right)^2\pi_{e_1}(\rmd x)\pi_{e_1}(\rmd y) \\
 + \left(\frac{2-\kappa}{2\kappa}\right) \frac{1}{\pi_{e_2}(\cB_{e_2})^2}\int_{\cB_{e_2}}\int_{\cB_{e_2}}\left(f(y)-f(x)\right)^2\pi_{e_2}(\rmd x)\pi_{e_2}(\rmd y).\end{multline*}
 Therefore the last inequality becomes
 \begin{multline*}
\int_{\cB_i}\int_{\cB_j}\left(f(y)-f(x)\right)^2\frac{\pi_i(\rmd x)}{\pi_i(\cB_i)}\frac{\pi_j(\rmd y)}{\pi_j(\cB_j)} \\
\leq \frac{\textsf{D}(\I_0)}{\min_{i\in\I_0}\pi_i(\cB_i)}\frac{2}{\kappa}\sum_{i\in\I_0} \frac{1}{\pi_i(\cB_i)}\int_{\cB_i}\int_{\cB_i}
\left(f(y)-f(x)\right)^2\pi_{i}(\rmd x)\pi_i(\rmd y).
\end{multline*}

This inequality together with  (\ref{proof:thm:mixing:2:eq3}) and  (\ref{proof:thm:mixing:2:eq2}) gives
\[\frac{\e(f,f)}{\textsf{Var}(f)-\zeta} \geq \frac{\kappa}{2\textsf{D}(\I_0)} \min_{i\in\I_0}\left\{\pi_i(\cB_i)^2\right\}\min_{i\in\I_0}\left\{\pi(i)\textsf{SpecGap}_{\cB_i}(K_i)\right\}.\]
This concludes the proof.
\vspace{-0.6cm}
\begin{flushright}
$\square$
\end{flushright}

\medskip

\subsection{Some preliminary remarks on the proof of Theorem \ref{thm:mixing:lm}}
We collect here some basic calculations on $\Pi(\cdot\vert z)$ that we rely on  repeatedly in the proofs. For any subset $B$ of $\Delta$, and $\delta_0\in\Delta$, we have
\begin{eqnarray}\label{eq1:proof:lm:step0}
\Pi(B\vert z) & = &   \Pi(\delta_0\vert z) \sum_{\delta\in B}\frac{\Pi(\delta\vert z)}{\Pi(\delta_0\vert z)}\nonumber\\
 & = &  \Pi(\delta_0\vert z) \sum_{\delta\in B}\frac{\omega_\delta}{\omega_{\delta_0}} \left(\gamma\rho\right)^{\frac{\|\delta\|_0-\|\delta_0\|_0}{2}} \frac{\int_{\rset^p}e^{-\frac{1}{2\sigma^2}\|z-Xu\|_2^2-\frac{1}{2}u'D_{(\delta)}^{-1} u}\rmd u}{\int_{\rset^p}e^{-\frac{1}{2\sigma^2}\|z-Xu\|_2^2-\frac{1}{2}u'D_{(\delta_0)}^{-1} u}\rmd u}\nonumber\\
& =& \Pi(\delta_0\vert z)\sum_{\delta\in B}\frac{\omega_\delta}{\omega_{\delta_0}} \left(\gamma\rho\right)^{\frac{\|\delta\|_0-\|\delta_0\|_0}{2}}\frac{\sqrt{\det\left(\sigma^2D_{(\delta_0)}^{-1}+X'X\right)}}{\sqrt{\det\left(\sigma^2D_{(\delta)}^{-1}+X'X\right)}}\nonumber\\
&& \times \frac{e^{\frac{1}{2\sigma^2}z'X\left(\sigma^2D_{(\delta)}^{-1}+X'X\right)^{-1}X'z}}{e^{\frac{1}{2\sigma^2}z'X\left(\sigma^2D_{(\delta_0)}^{-1}+X'X\right)^{-1}X'z}}. \end{eqnarray}
By the determinant lemma ($\det(A+UV') = \det(A)\det(I_m+V'A^{-1}U)$ valid for any invertible matrix $A\in\rset^{n\times n}$, and $U,V\in\rset^{n\times m}$) we have
\[
\left(\gamma\rho\right)^{\frac{\|\delta\|_0-\|\delta_0\|_0}{2}}\frac{\sqrt{\det\left(\sigma^2D_{(\delta_0)}^{-1}+X'X\right)}}{\sqrt{\det\left(\sigma^2D_{(\delta)}^{-1}+X'X\right)}}  = \sqrt{\frac{\det\left(I_n + \frac{1}{\sigma^2}X D_{(\delta_0)}X'\right)}{\det\left(I_n + \frac{1}{\sigma^2}X D_{(\delta)}X'\right)}}.\]
By the Woodbury identity (\cite{horn:johnson}~Section 0.7.4) which states that for any set of matrices $U,V,A,C$ with matching dimensions, $(A+UCV)^{-1} = A^{-1} -A^{-1}U(C^{-1} + VA^{-1}U)^{-1}VA^{-1}$, we have
\begin{multline*}
X\left(\sigma^2D_{(\delta)}^{-1}+X'X\right)^{-1}X' = \frac{1}{\sigma^2}XD_{(\delta)}X' - \frac{1}{\sigma^4}XD_{(\delta)}X'\left(I_n + \frac{1}{\sigma^2}XD_{(\delta)}X'\right)^{-1}XD_{(\delta)}X' \\
= I_n -\left(I_n +\frac{1}{\sigma^2}X D_{(\delta)}  X'\right)^{-1}.\end{multline*}
Hence,
\[
\frac{e^{\frac{1}{2\sigma^2}z'X\left(\sigma^2D_{(\delta)}^{-1}+X'X\right)^{-1}X'z}}{e^{\frac{1}{2\sigma^2}z'X\left(\sigma^2D_{(\delta_0)}^{-1}+X'X\right)^{-1}X'z}}  =  \frac{e^{\frac{1}{2\sigma^2}z'\left(I_n +\frac{1}{\sigma^2} X D_{(\delta_0)} X'\right)^{-1}z}}{e^{\frac{1}{2\sigma^2}z'\left(I_n +\frac{1}{\sigma^2}X D_{(\delta)} X'\right)^{-1}z}}.\]
It follows from the above and (\ref{eq1:proof:lm:step0}) that for all $\delta_0\in\Delta$, and $B\subseteq \Delta$,
\begin{equation}\label{eq2:proof:lm:step0}
\Pi(B\vert z) =\Pi(\delta_0\vert z) \sum_{\delta\in B} \frac{\omega_\delta}{\omega_{\delta_0}} \sqrt{\frac{\det\left(L_{\delta_0}\right)}{\det\left(L_{\delta}\right)}} \frac{e^{\frac{1}{2\sigma^4}z'L_{\delta_0}^{-1}z}}{e^{\frac{1}{2\sigma^4}z'L_{\delta}^{-1}z}},
\end{equation}
where, for $\delta\in\Delta$,  we recall the definition $L_\delta \eqdef I_n + \frac{1}{\sigma^2} X D_{(\delta)} X'$. We will use the following to deal with the terms involved in (\ref{eq2:proof:lm:step0}).  Suppose that we have $\vartheta,\delta\in\Delta$ such that $\vartheta\supseteq\delta$. Setting $\tau \eqdef \frac{1}{\sigma^2}\left(\frac{1}{\rho}- \gamma\right)$, it is easily seen that
\begin{equation}\label{eq1:control:term1}
L_{\vartheta} = L_{\delta} + \tau \sum_{j:\;\delta_{j}=0,\vartheta_j=1}\; X_jX_j'.\end{equation}
Therefore by the determinant lemma,
\begin{equation}\label{eq2:control:term1}
\frac{\det(L_\vartheta)}{\det(L_\delta)} =  \det\left(I_{\|\vartheta-\delta\|_0} + \tau X_{(\vartheta-\delta)}'L_\delta^{-1}X_{(\vartheta-\delta)}\right).\end{equation}
And by the Woodbury identity,
\begin{equation}\label{eq21:control:term1}
L_\vartheta^{-1} = L_\delta^{-1} -\tau L_\delta^{-1}X_{(\vartheta-\delta)}\left(I_{\|\vartheta-\delta\|_0} + \tau X_{(\vartheta-\delta)}'L_\delta^{-1}X_{(\vartheta-\delta)}\right)^{-1}X_{(\vartheta-\delta)}'L_{\delta}^{-1}.\end{equation}

\medskip

\subsection{Proof Theorem \ref{thm:mixing:lm}}\label{sec:proof:thm:mixing:lm}
Throughout, we fix $\zeta_0\in (0,1)$, and $z\in\e_k$ for some $k$ that satisfies (\ref{cond:v}).  We recall that the initial distribution is taken as $\nu_0 =\Pi(\cdot\vert\delta^{(\textsf{i})},z)$, for some initial choice $\delta^{(\textsf{i})}$.  Let 
 \[f_0(\theta)  \eqdef \frac{\nu_{0}(\theta)}{\Pi(\theta\vert z)},\;\theta\in\rset^p,\] 
be the density of $\nu_{0}$ with respect to $\Pi(\cdot\vert z)$. Since  $\Pi(\theta\vert z) \geq \Pi(\delta^{(\textsf{i})}\vert z) \Pi(\theta\vert \delta^{(\textsf{i})}, z)$,  we have
\[
f_0(\theta) = \frac{\Pi(\theta\vert \delta^{(\textsf{i})}, z)}{ \Pi(\theta\vert z)} \leq \frac{1}{\Pi(\delta^{(\textsf{i})}\vert z)}.\]
Using $\Pi(\tilde\delta_\star\vert z)\geq 1/2$ we can write,
\[
\frac{1}{\Pi(\delta^{(\textsf{i})}\vert z)} \leq  \frac{2\Pi(\tilde\delta_\star\vert z)}{\Pi(\delta^{(\textsf{i})}\vert z)}.\]
Using (\ref{eq2:proof:lm:step0}) with $B=\{\tilde\delta_\star\}$ and $\delta_0=\delta^{(\textsf{i})}$, and using  (\ref{eq2:control:term1}) and (\ref{eq21:control:term1}), we deduce that 
\begin{eqnarray*}
\|f_0\|_{\pi,\infty} \leq \frac{1}{\Pi(\delta^{(\textsf{i})}\vert z)} & \leq & 2 p^{(u+1)\textsf{FP}}\sqrt{\det\left(I_{\textsf{FP}} + \tau X_{(\delta^{(\textsf{i})}-\tilde\delta_\star)}' L_{\tilde\delta_\star}^{-1} X_{(\delta^{(\textsf{i})}-\tilde\delta_\star)}\right)},\\
 & \leq & 2\left( p^{(u+1)}\sqrt{1 + \frac{n \textsf{FP}}{\sigma^2\rho}}\right)^{\textsf{FP}},
 \end{eqnarray*}
where the second inequality uses the fact the eigenvalues of $L_\delta$ are all at least $1$, and (\ref{eq:norm:X}). In view of the above, we set
\begin{equation}\label{zeta:proof}
\zeta = \frac{\zeta_0^2}{8}\left( p^{(u+1)}\sqrt{1 + \frac{n\textsf{FP}}{\sigma^2\rho}}\right)^{\textsf{-2FP}},\end{equation}
which gives $\zeta\|f_0\|_{\pi,\infty}^2\leq \zeta_0^2/2$.
Therefore,  by Lemma \ref{lem:key} (applied with  $\|\cdot\|_\star = \|\cdot\|_{\pi,\infty}$), for all integer $N\geq 1$,   we have
\begin{equation}\label{eq1:proof:mix:lm}
\|\nu_0 K^N -\Pi(\cdot\vert z)\|_\tv^2 \leq  \|f_0\|_{\pi,\infty}^2 \left(1-\textsf{SpecGap}_\zeta(K)\right)^N+ \frac{\zeta_0^2}{2}.
 \end{equation}

\noindent\underline{\texttt{Lower bound on $\textsf{SpecGap}_\zeta(K)$}}.\;\; To proceed with (\ref{eq1:proof:mix:lm}) we need a lower bound on the approximate spectral gap.  We apply Theorem \ref{thm:mixing} with the obvious choices $\I=\Delta$, $\I_0= \mathcal{D}_{k}$, and $\cB_\delta=\rset^p$, and $m=+\infty$. 
For $z\in\e_k$, and $\zeta$ as in (\ref{zeta:proof}) we have 
\begin{equation*}\label{bound:max:1}
\frac{10}{\zeta}\left(1-\Pi(\mathcal{D}_{k}\vert z)\right) \leq \frac{320}{\zeta_0^2} \left( p^{(u+1)}\sqrt{1 + \frac{n\textsf{FP}}{\sigma^2\rho}}\right) ^{2\textsf{FP}} \frac{1}{p^{\frac{u}{2}(k+1)}} \leq 1,
\end{equation*}
provided  (\ref{cond:v}) holds. In other words we have $\Pi(\D_k\vert z)\geq 1-(\zeta/10)$ as required by Theorem \ref{thm:mixing}. It remains only to find $\kappa$. To do so, we   consider the follow graph on $\I_0$: we link $\delta^{(1)}$ and $\delta^{(2)}$ if $\delta^{(1)}\supseteq\delta^{(2)}$, or $\delta^{(2)}\supseteq\delta^{(1)}$, and $\|\delta^{(2)}-\delta^{(1)}|_0=1$. We need to  find $\kappa>0$ such that for all $\delta^{(1)},\\delta^{(2)}\in\mathcal{D}_{k}$, such that  if $\delta^{(1)}\supseteq\delta^{(2)}$, or $\delta^{(2)}\subseteq\delta^{(1)}$, and $\|\delta^{(2)}-\delta^{(1)}\|_0=1$ we have
\begin{equation}\label{cond:kappa}
 \int_{\rset^p}\min\left(\Pi(\theta\vert \delta^{(1)}, z),\Pi(\theta\vert \delta^{(2)}, z)\right) \rmd\theta\geq \kappa.\end{equation}
Suppose that $\delta^{(2)}\supseteq \delta^{(1)}$. Then
\[\frac{\Pi(\theta\vert \delta^{(2)}, z)}{\Pi(\theta\vert\delta^{(1)}, z)} \geq  \frac{\int_{\rset^p}e^{-\frac{1}{2\sigma^2}\|z-X\theta\|_2^2 -\frac{1}{2}\theta'D_{(\delta^{(1)})}^{-1}\theta}\rmd\theta}{\int_{\rset^p}e^{-\frac{1}{2\sigma^2}\|z-X\theta\|_2^2 -\frac{1}{2}\theta'D_{(\delta^{(2)})}^{-1}\theta}\rmd\theta} .\]
Using (\ref{eq1:proof:lm:step0}),   and (\ref{eq2:proof:lm:step0}) we have
\[\frac{\int_{\rset^p}e^{-\frac{1}{2\sigma^2}\|z-X\theta\|_2^2 -\frac{1}{2}\theta'D_{(\delta^{(1)})}^{-1}\theta}\rmd\theta}{\int_{\rset^p}e^{-\frac{1}{2\sigma^2}\|z-X\theta\|_2^2 -\frac{1}{2}\theta'D_{(\delta^{(2)})}^{-1}\theta}\rmd\theta}  \geq (\gamma\rho)  \sqrt{\frac{\det(L_{\delta^{(2)}})}{\det(L_{\delta^{(1)}})}} \frac{e^{\frac{1}{2\sigma^4}z'L^{-1}_{\delta^{(2)}} z}}{e^{\frac{1}{2\sigma^4}z'L^{-1}_{\delta^{(1)}} z}}.\]
We combine these inequalities with  (\ref{eq2:control:term1}) and (\ref{eq21:control:term1}) to get
\[\frac{\Pi(\theta\vert \delta^{(2)}, z)}{\Pi(\theta\vert \delta^{(1)}, z)} \geq (\gamma\rho) e^{-\frac{\tau(X_{(\delta^{(2)}-\delta^{(1)})}'L_{\delta^{(1)}}^{-1}z)^2}{2\sigma^2\left(1+ \tau n\varrho\right)}} .\]
For $j$ such that $\delta^{(2)}_j=1$, and $\delta^{(1)}_j=0$, we must have $\tilde\delta_{\star,j}=0$, since both $\delta^{(1)}$ and $\delta^{(2)}$ contain $\tilde\delta_\star$. Therefore, since $z = X\theta_\star + \sigma v$, where $\sigma = (z-X\theta_\star)/\sigma$, we have for $z\in\e_k$,
\begin{multline*}
|X_j' L_{\delta^{(1)}}^{-1}z|   =\left|\sigma X_j' L_{\delta^{(1)}}^{-1}v + \sum_{i:\;\tilde\delta_{\star,i}=0} \theta_{\star,i}X_j' L_{\delta^{(1)}}^{-1} X_i + \sum_{i:\;\tilde\delta_{\star,i}=1} \theta_{\star,i}X_j' L_{\delta^{(1)}}^{-1} X_i \right|\\
\leq 2\sigma\sqrt{(1+k)n\log(p)} + s_\star n\epsilon + \|\tilde\theta_\star\|_1\Cset(\tilde s_\star+ k)\\
\leq \sigma\sqrt{n\log(p)}\left(s_\star+ 2\sqrt{1+k} + \frac{\|\tilde\theta_\star\|_1\Cset(\tilde s_\star+ k)}{\sigma\sqrt{n\log(p)}}\right).
\end{multline*}
It follows that
\[\frac{\int_{\rset^p}e^{-\frac{1}{2\sigma^2}\|z-X\theta\|_2^2 -\frac{1}{2}\theta'D_{(\delta_1)}^{-1}\theta}\rmd\theta}{\int_{\rset^p}e^{-\frac{1}{2\sigma^2}\|z-X\theta\|_2^2 -\frac{1}{2}\theta'D_{(\delta_2)}^{-1}\theta}\rmd\theta}  \geq (\gamma\rho) p^{-\frac{1}{2\varrho}\left(s_\star+ 2\sqrt{1+k} + \frac{\|\tilde\theta_\star\|_1\Cset(\tilde s_\star+ k)}{\sigma\sqrt{n\log(p)}}\right)^2}.\]
Hence we can apply Theorem \ref{thm:mixing}  with 
\[\kappa = (\gamma\rho) p^{-\frac{1}{2\varrho}\left(s_\star+ 2\sqrt{1+k} + \frac{\|\tilde\theta_\star\|_1\Cset(\tilde s_\star +k)}{\sigma\sqrt{n\log(p)}}\right)^2}. \]
The diameter of the graph thus constructed is $2k$.  We conclude from the above and Theorem \ref{thm:mixing} that  for $z\in\e$,
 \begin{equation*}
 \textsf{SpecGap}_\zeta(K)\geq 
\frac{(\gamma\rho)}{4k} p^{-\frac{1}{2\varrho}\left(s_\star+ 2\sqrt{1+k} + \frac{\|\tilde\theta_\star\|_1\Cset(\tilde s_\star+ k)}{\sigma\sqrt{n\log(p)}}\right)^2}  \min_{\delta\in\mathcal{D}_{k}} \pi(\delta\vert z).
 \end{equation*}
Furthermore, for $\delta\in\mathcal{D}_k$, using (\ref{eq2:proof:lm:step0}) with $B=\{\delta\}$ and $\delta_0=\tilde\delta_\star$, together with  (\ref{eq2:control:term1}) and (\ref{eq21:control:term1}), we have
\begin{equation}\label{mino:pi}
\pi(\delta\vert z) \geq  \frac{1}{2} \frac{\pi(\delta\vert z)}{\pi(\tilde\delta_\star\vert z)}\geq \frac{1}{2} \frac{1}{p^{k(u+1)}}\left(1+ \frac{nk}{\sigma^2\rho}\right)^{-\frac{k}{2}}.\end{equation}
Hence 
 \begin{equation*}
 \textsf{SpecGap}_\zeta(K)\geq 
\frac{(\gamma\rho)}{8k} p^{-\frac{1}{2\varrho}\left(s_\star+ 2\sqrt{1+k} + \frac{\|\tilde\theta_\star\|_1\Cset(\tilde s_\star +k)}{\sigma\sqrt{n\log(p)}}\right)^2} p^{-k(u+1)} \left(1+ \frac{n k}{\sigma^2\rho}\right)^{-\frac{k}{2}}.
 \end{equation*}
It follows from (\ref{eq1:proof:mix:lm}) and the lower bound on $\textsf{SpecGap}_\zeta(K)$ above that for 
\begin{equation}\label{cond:it}
 N\geq \frac{A}{(\gamma\rho)} \log\left(\frac{1}{\zeta_0}\right)p^{\frac{1}{2\varrho}\left(s_\star+ 2\sqrt{1+k} + \frac{\|\tilde\theta_\star\|_1\Cset(\tilde s_\star +k)}{\sigma\sqrt{n\log(p)}}\right)^2} p^{k(u+1)} \left(1+ \frac{n k}{\sigma^2\rho}\right)^{\frac{k}{2}},\end{equation}
we have
\[\|\nu_0 K^N -\Pi(\cdot\vert z)\|_\tv \leq \zeta_0,\]
where $A$ is an absolute constant that does not depend on $p$ nor $\zeta_0$. This completes the proof.
\vspace{-0.6cm}
\begin{flushright}
$\square$
\end{flushright}

\medskip

\medskip

\appendix
\section{Some technical results}
We make use of the following standard Gaussian deviation bound.
%
\begin{lemma}\label{max:dev}
Let $Z\sim \textbf{N}(0,I_m)$, and $u_1,\ldots,u_N$ be vectors of $\rset^m$. Then for all $x\geq 0$,
\[\PP\left[\max_{1\leq j\leq N} \left|\pscal{u_j}{Z}\right|> \max_{1\leq j\leq N}\|u_j\|_2\sqrt{2(x+\log(N))}\right]\leq \frac{2}{e^x}.\]
\end{lemma}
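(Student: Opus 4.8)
The plan is to reduce the maximum over $j$ to a single-index Gaussian tail bound via a union bound, and then estimate the tail of a one-dimensional Gaussian. First I would observe that for each fixed $j$, the random variable $\pscal{u_j}{Z}$ is a centered Gaussian with variance $\|u_j\|_2^2$, since $Z\sim\gauss(0,I_m)$. Writing $M\eqdef\max_{1\leq j\leq N}\|u_j\|_2$, the standard Gaussian tail bound $\PP(|G|>t)\leq e^{-t^2/(2s^2)}$ for $G\sim\gauss(0,s^2)$ and $t\geq 0$ (which itself follows, e.g., from Chernoff's inequality $\PP(G>t)\leq e^{-t^2/(2s^2)}$ applied to $\pm G$, giving a factor $2$ that is absorbed by the exponent after a routine adjustment, or more simply from the symmetric bound $\PP(|G|>t)\leq 2e^{-t^2/(2s^2)}$) yields
\[
\PP\left(\left|\pscal{u_j}{Z}\right| > M\sqrt{2(x+\log N)}\right) \leq \PP\left(\left|\pscal{u_j}{Z}\right| > \|u_j\|_2\sqrt{2(x+\log N)}\right) \leq \frac{2}{N e^{x}},
\]
where the first inequality uses $\|u_j\|_2\leq M$ and the second plugs $t=\|u_j\|_2\sqrt{2(x+\log N)}$ into the symmetric tail bound, noting $e^{-t^2/(2\|u_j\|_2^2)} = e^{-(x+\log N)} = (Ne^x)^{-1}$.

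Next I would apply the union bound over $j=1,\ldots,N$:
\[
\PP\left(\max_{1\leq j\leq N}\left|\pscal{u_j}{Z}\right| > M\sqrt{2(x+\log N)}\right) \leq \sum_{j=1}^N \PP\left(\left|\pscal{u_j}{Z}\right| > M\sqrt{2(x+\log N)}\right) \leq N\cdot\frac{2}{Ne^{x}} = \frac{2}{e^x},
\]
which is exactly the claimed inequality. This completes the argument.

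There is no real obstacle here; the only point requiring a little care is the constant in the one-dimensional Gaussian tail bound. Using the crude symmetric bound $\PP(|G|>t)\leq 2e^{-t^2/(2s^2)}$ would cost a spurious factor $2$ inside the per-index estimate, which after the union bound still gives $4/e^x$ rather than $2/e^x$. To land precisely on $2/e^x$, I would instead use the one-sided Chernoff bound $\PP(G>t)\leq e^{-t^2/(2s^2)}$ for each of $+\pscal{u_j}{Z}$ and $-\pscal{u_j}{Z}$ separately, so that the event $\{|\pscal{u_j}{Z}|>t\}$ contributes $2e^{-t^2/(2s^2)}$ but the two one-sided events are then summed over $2N$ terms total; choosing the threshold as $M\sqrt{2(x+\log N)}$ makes each of these $2N$ terms equal to $(Ne^x)^{-1}$, and $2N\cdot(Ne^x)^{-1}=2/e^x$. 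Either bookkeeping route works; the statement as written corresponds to the second.
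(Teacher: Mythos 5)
Your proof is correct, and it is the standard union-bound-plus-Gaussian-tail argument; the paper states this lemma without proof, describing it as a "standard Gaussian deviation bound," so this is precisely the argument being invoked. One small remark: the worry in your closing paragraph about a spurious factor of $2$ is unfounded — the symmetric bound $\PP(|G|>t)\leq 2e^{-t^2/(2s^2)}$ already gives $2/(Ne^x)$ per index, and the union bound over the $N$ indices then yields exactly $N\cdot 2/(Ne^x)=2/e^x$, as your own displayed computation shows; the "two one-sided events over $2N$ terms" route is the same bookkeeping, and neither produces $4/e^x$.
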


The next result gives a bound on $\Cset_X$, and shows that H\ref{H:tech} holds with high probability in the case of a Gaussian ensemble. 

\begin{lemma}\label{lem:H:tech}
Suppose that $X\in\rset^{n\times p}$ is a random matrix with i.i.d. standard Normal entries. Given an integer $s$, and positive constants $\sigma,\gamma$ and $\rho$,  set 
\[\Cset(s) \eqdef \max_{\delta\in\Delta:\;\|\delta\|_0 \leq s} \;\; \max_{i\neq j,\;\delta_j=0}\;  \left| X_j'\left( I_n + \frac{1}{\sigma^2\rho}X_\delta X_\delta' +\frac{\gamma}{\sigma^2}X_{\delta^c}X_{\delta^c}'\right) X_i \right|.\]
Then there exist some universal  finite constants $c_0,a,A$ such that for $n\geq A s^2\log(p)$,  the following two statements hold with probability at least $1-\frac{a}{p}$: for $\gamma>0$ taken small enough and 
\begin{equation}\label{tech:lem:cond:gamma}
\sigma^2 s\rho \leq c_0 \sqrt{n\log(p)}, 
\end{equation}
it holds that
\begin{multline}\label{eq:lem:H:tech}
\Cset(s) \leq 2 c_0\sqrt{n\log(p)},\;\;\mbox{ and }\;\;\\
\;\;\min_{\delta:\;\|\delta\|_0\leq s}\;\;\inf\left\{ \frac{u'(X_{\delta^c}' L^{-1}_\delta X_{\delta^c}) u}{n\|u\|_2^2},\; u\in\rset^{p-s},\;0<\|\textsf{supp}(u)\|_0 \leq s\right\} \geq \frac{1}{32}.\end{multline}
\end{lemma}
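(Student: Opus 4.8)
\medskip
The plan is to make the heuristic from the remark following H\ref{H:tech} precise: when $\gamma$ is small and $\sigma^2\rho$ is small relative to $n$, the matrix $L_\delta^{-1}$ is a small perturbation of the orthogonal projector $P_\delta^\perp\eqdef I_n-X_\delta(X_\delta'X_\delta)^{-1}X_\delta'$ onto the orthogonal complement of the column span of $X_\delta$. Everything then reduces to two deterministic properties of $X$ on sparse column subsets --- incoherence and a restricted isometry --- plus perturbation estimates. First I would fix a large universal constant $C_1$ and work on the event $\mathcal{G}$ on which: (i) $n/2\le\lambda_{\min}(X_\nu'X_\nu)\le\lambda_{\max}(X_\nu'X_\nu)\le 2n$ for every $\nu\in\Delta$ with $\|\nu\|_0\le s$; (ii) $|X_j'X_\ell|\le C_1\sqrt{n\log p}$ for all $j\ne\ell$; (iii) the spectral norm of $X$ is at most $C_1(\sqrt n+\sqrt p)$. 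A Davidson--Szarek bound with a union bound over the $\le\binom{p}{s}$ subsets gives (i) with probability $\ge 1-p^{-1}$ once $n\ge A s\log p$; a standard Gaussian maximal inequality applied conditionally on $X_\ell$ (using $\|X_\ell\|_2^2\le 2n$ on $\mathcal G$) together with a union bound over pairs gives (ii) with probability $\ge 1-p^{-1}$ for $C_1$ large; (iii) is standard. Hence $\PP(\mathcal{G})\ge 1-a/p$, and on $\mathcal{G}$ every $X_\delta$ with $\|\delta\|_0\le s$ has full column rank, so $P_\delta^\perp$ is well defined.

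\medskip
Next I would record the perturbation estimates. With $M_\delta\eqdef I_n+\tfrac1{\sigma^2\rho}X_\delta X_\delta'$, the Woodbury identity gives $M_\delta^{-1}=P_\delta^\perp+E_\delta^{(\rho)}$ with $E_\delta^{(\rho)}=\sigma^2\rho\,X_\delta(X_\delta'X_\delta)^{-1}(\sigma^2\rho I+X_\delta'X_\delta)^{-1}X_\delta'\succeq 0$, whose spectral norm on $\mathcal{G}$ is at most $8\sigma^2\rho/n$. Because any column $X_j$ with $\delta_j=0$ is incoherent with $X_\delta$ (so $\|X_\delta'X_j\|_2\le\sqrt s\,C_1\sqrt{n\log p}$ by (ii)), the bilinear forms $X_j'E_\delta^{(\rho)}X_i$ (for $i\notin\delta$) and $w'X_\nu'E_\delta^{(\rho)}X_\nu w$ admit the sharper bounds $\lesssim C_1^2 s\sigma^2\rho\log p/n$ and $\le 16\sigma^2\rho$, which under (\ref{tech:lem:cond:gamma}) are negligible compared with $\sqrt{n\log p}$, resp. $n$, once $A$ is large. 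The resolvent identity $L_\delta^{-1}-M_\delta^{-1}=-\tfrac{\gamma}{\sigma^2}M_\delta^{-1}X_{\delta^c}X_{\delta^c}'L_\delta^{-1}$ with $\|M_\delta^{-1}\|,\|L_\delta^{-1}\|\le 1$ and (iii) gives $\|L_\delta^{-1}-M_\delta^{-1}\|\le C_1^2\gamma(\sqrt n+\sqrt p)^2/\sigma^2\to 0$ as $\gamma\downarrow 0$; so ``for $\gamma$ small enough'' the $\gamma$-contribution to every quantity below is as small as we like.

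\medskip
For the coherence bound, fix $\|\delta\|_0\le s$ and $i\ne j$ with $\delta_j=0$. If $i\in\delta$ then $P_\delta^\perp X_i=0$, so $X_j'L_\delta^{-1}X_i=X_j'(L_\delta^{-1}-P_\delta^\perp)X_i$ is controlled purely by the Step~2 remainders. If $i\notin\delta$, then $X_j'P_\delta^\perp X_i=X_j'X_i-X_j'X_\delta(X_\delta'X_\delta)^{-1}X_\delta'X_i$, whose first term is $\le C_1\sqrt{n\log p}$ by (ii) and whose second term is $\le\|X_j'X_\delta\|_2\,\lambda_{\min}(X_\delta'X_\delta)^{-1}\|X_\delta'X_i\|_2\le 2C_1^2 s\log p$; since $n\ge A s^2\log p$ forces $s\log p\le\sqrt{n\log p}$, this gives $|X_j'P_\delta^\perp X_i|\le(C_1+2C_1^2)\sqrt{n\log p}$. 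Adding the Step~2 remainders and setting $c_0\eqdef C_1+2C_1^2$ (with $A$ large enough that the $\rho$-remainder stays below $\tfrac{c_0}{2}\sqrt{n\log p}$) yields $\Cset(s)\le 2c_0\sqrt{n\log p}$. For the restricted eigenvalue, take $\|\delta\|_0\le s$ and $u$ supported on some $\nu\subseteq\delta^c$ with $\|\nu\|_0\le s$, and write $w$ for its nonzero block, $\|w\|_2=1$; then $u'(X_{\delta^c}'P_\delta^\perp X_{\delta^c})u=\|X_\nu w\|_2^2-\|P_\delta X_\nu w\|_2^2\ge n/2-\tfrac2n\|X_\delta'X_\nu\|^2\ge n/2-2C_1^2 s^2\log p\ge 3n/8$ once $A\ge 16C_1^2$ (using the Frobenius bound $\|X_\delta'X_\nu\|\le s C_1\sqrt{n\log p}$ from (ii)). Subtracting the Step~2 remainders (each at most $n/16$ by the above and by taking $\gamma$ small) leaves $u'(X_{\delta^c}'L_\delta^{-1}X_{\delta^c})u\ge n/4\ge n/32$, and minimizing over $\delta$ and $u$ completes the proof.

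\medskip
\textbf{Main obstacle.} The conceptual heart is the reduction to $P_\delta^\perp$; the real work is the bookkeeping of the three perturbation sources --- the $\gamma$-term, the $\sigma^2\rho$-term, and the ``second-order coherence'' $X_j'X_\delta(X_\delta'X_\delta)^{-1}X_\delta'X_i$ --- so that all absolute constants and the scaling $n\gtrsim s^2\log p$ come out consistently. In particular, the second-order coherence estimate is the one place where $s^2\log p$ (rather than $s\log p$) is genuinely needed, and one must verify that a single constant $c_0$ can be chosen compatibly with appearing both in the hypothesis (\ref{tech:lem:cond:gamma}) and in the conclusion $\Cset(s)\le 2c_0\sqrt{n\log p}$.
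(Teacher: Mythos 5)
Your proposal is correct and follows essentially the same route as the paper: condition on a high-probability event encoding a sparse restricted eigenvalue bound, column-norm control and pairwise incoherence of $X$, then treat $L_\delta^{-1}$ as a small perturbation (in $\gamma$ and in $\sigma^2\rho$) of $M_\delta^{-1}=\bigl(I_n+\tfrac{1}{\sigma^2\rho}X_\delta X_\delta'\bigr)^{-1}$, which is itself a perturbation of the projector onto the orthogonal complement of $\mathrm{span}(X_\delta)$, with the second-order coherence term $X_j'X_\delta(X_\delta'X_\delta)^{-1}X_\delta'X_i$ being exactly where $n\gtrsim s^2\log p$ enters. The only cosmetic differences are that the paper invokes Raskutti et al.\ for the restricted eigenvalue event and finishes the quadratic-form bound by diagonal dominance, whereas you use Davidson--Szarek plus a union bound and bound the quadratic form directly through the projector.
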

\begin{proof}
For a matrix $M\in\rset^{n\times p}$ we set 
\[v(M,s) \eqdef \inf\left\{\frac{u'(M'M)u}{n\|u\|_2^2}\; u\neq 0, \|u\|_0 \leq s\right\},\]
and for $\kappa_0 = 1/64$ and $c_0=8$, we define 
\begin{multline*}
\e \eqdef \left\{M\in\rset^{n\times p}:\; v(M,s) \geq \kappa_0,\;\; \max_{1\leq j\leq p} \|M_j\|_2 \leq 2\sqrt{n},\;\right.\\
\left. \min_{1\leq j\leq p} \|M_j\|_2 \geq \sqrt{\frac{n}{2}},\;\;\mbox{ and }\;\; \max_{j\neq  k} |\pscal{M_j}{M_k}| \leq c_0\sqrt{n\log(p)}\right\}.
\end{multline*}
By Theorem 1 of \cite{raskutti:etal:10}, Lemma 1-(4.2) of \cite{laurent:massart:00}, and standard Gaussian deviation bounds, we can find universal constants $a,A$, such that for $n\geq A s\log(p)$, we have $\PP(X\notin \e) \leq \frac{a}{p}$. So to obtained the statement of the lemma,  it suffices to consider some arbitrary element $X\in\e$ and show that (\ref{eq:lem:H:tech}) holds.

Fix $\delta\in\Delta$ such that $\|\delta\|_0\leq s$. We set $M_\delta \eqdef I_n + \frac{1}{\sigma^2\rho}X_\delta X_\delta'$, so that $L_\delta = M_\delta +\frac{\gamma}{\sigma^2} X_{\delta^c}X_{\delta^c}'$.  The Woodbury identity gives
\begin{equation}\label{Ld}
L_\delta^{-1} = M_\delta^{-1} -\frac{\gamma}{\sigma^2}M_\delta^{-1} X_{\delta^c}\left(I_{\|\delta^c\|_0} +\frac{\gamma}{\sigma^2}X_{\delta^c}'M_\delta^{-1}X_{\delta^c}\right)^{-1}X_{\delta^c}' M_\delta^{-1}.\end{equation}
Hence, for any $j,k$,
\begin{equation}\label{jLk}
X_j'L_\delta^{-1} X_k  = X_j' M_\delta^{-1} X_k -\frac{\gamma}{\sigma^2} X_j' M_\delta^{-1} X_{\delta^c}\left(I_{\|\delta^c\|_0} + \frac{\gamma}{\sigma^2} X_{\delta^c}'M_\delta^{-1}X_{\delta^c}\right)^{-1} X_{\delta^c}'M_\delta^{-1} X_k.\end{equation}
If $C_1 = \max_{\ell} X_\ell' M_\delta^{-1} X_\ell$, and $C_0 = \max_{\ell\neq j,\;\delta_j=0} |X_j' M_\delta^{-1} X_\ell|$, then we deduce easily from (\ref{jLk}) that for all $j\neq k$ such that $\delta_j=0$,
\begin{equation}\label{jLk:2}
|X_j' L_\delta^{-1} X_k | \leq C_0 + \frac{\gamma}{\sigma^2} \left(C_1^2 + pC_0^2\right) .
\end{equation}
In order to proceed, we need to bound the term $X_jM_\delta^{-1} X_k$.  Easily, for $X\in\e$, we have
\[X_j'M_\delta^{-1}X_j \leq \|X_j\|_2^2\leq 4n.\]
Another application of the Woodbury identity gives
\begin{equation}\label{jMk:1}
M_\delta^{-1} = I_n - \frac{1}{\sigma^2\rho}X_\delta\left(I_{\|\delta\|_0} + \frac{1}{\sigma^2\rho}X_\delta'X_\delta\right)^{-1}X_\delta'.\end{equation}

If $X_\delta = U\Lambda V'$ is the singular value decomposition of $X_\delta$, with positive singular values $\lambda_1\geq \lambda_2 \ldots\geq \lambda_{\|\delta\|_0}$, and if $\mathcal{P}_{\perp}$ denotes the projector on the space orthogonal to the span of $X_\delta$, we have
\[M_\delta^{-1} = \mathcal{P}_\perp + \sum_{\ell=1}^{\|\delta\|_0} \frac{\sigma^2\rho}{\sigma^2\rho + \lambda_\ell^2} U_\ell U_\ell'.\]
We note that for $X\in\e$, $\lambda_{\|\delta\|_0}^2\geq \kappa_0 n$. Therefore,for $k\neq j$,  and using the above,
\[|X_j' M_\delta^{-1} X_k |\leq \left|\pscal{X_j}{\mathcal{P}_\perp(X_k)}\right| + \frac{\sigma^2 s\rho}{\kappa_0}\leq 2c_0\sqrt{n\log(p)},\]
provided that $\sigma^2 s\rho \leq c_0\kappa_0\sqrt{n\log(p)}$ as assumed in (\ref{tech:lem:cond:gamma}).
We combine this with (\ref{jLk:2}) to obtain that for $j\neq k$ such that $\delta_j=0$,
\begin{multline}\label{jLk:3}
|X_j' L_\delta^{-1} X_ k| \leq 3c_0\sqrt{n\log(p)}\left(1 + \frac{\gamma}{\sigma^2} pc_0\sqrt{n\log(p)}\right) + 16\frac{\gamma}{\sigma^2} n^2 \leq 8c_0\sqrt{n\log(p)},
\end{multline}
for $\gamma$ small enough. (\ref{jLk:3})  says that $\Cset_X\leq 8c_0\sqrt{n\log(p)}$, for $X\in\e$, as claimed.

For $j$ such that $\delta_j=0$, (\ref{jMk:1}) gives
\begin{eqnarray}\label{jMj}
X_j'M_\delta^{-1}X_j & = & \|X_j\|_2^2 - \frac{1}{\sigma^2\rho}X_j'X_\delta\left(I_{\|\delta\|_0} + \frac{1}{\sigma^2\rho}X_\delta'X_\delta\right)^{-1}X_\delta'X_j \nonumber\\
& \geq & \|X_j\|_2^2 -\frac{\frac{1}{\sigma^2\rho}\|X_\delta'X_j\|_2^2}{1 + \frac{n\kappa_0}{\sigma^2\rho}}\nonumber\\
& \geq & \|X_j\|_2^2 -\frac{\|X_\delta'X_j\|_2^2}{n\kappa_0}\nonumber\\
& \geq & \|X_j\|_2^2 - \frac{s c_0^2\log(p)}{\kappa_0},\nonumber\\
& \geq & \frac{n}{4},\end{eqnarray}
since $n\geq A s\log(p)$, and by taking $A$ large enough ($A\geq 4c_0^2/\kappa_0)$. Equation (\ref{Ld})  then yields
\begin{multline*}
X_j'L_\delta^{-1} X_j \geq X_j' M_\delta^{-1} X_j - \frac{\gamma}{\sigma^2} \|X_{\delta^c}'M_\delta^{-1}X_j\|_2^2 \\
= X_j' M_\delta^{-1} X_j - \frac{\gamma}{\sigma^2} \left[(X_j'M_\delta^{-1} X_j)^2 + \sum_{k:\;\delta_k=0,k\neq j} (X_j'M_\delta^{-1} X_k)^2\right].\end{multline*}
For $2\gamma\leq\sigma^2$, it follows that
\[X_j'L_\delta^{-1} X_j \geq\frac{n}{8}- \frac{\gamma}{\sigma^2}(p-\|\delta\|_0)\left(4 c_0^2 n\log(p)\right),\]
which together with (\ref{jLk:3}) and (\ref{tech:lem:cond:gamma})  implies that for any $u\in\rset^p$ such that $\delta^c\supseteq \textsf{supp}(u)$, and $\|\textsf{supp}(u)\|_0 \leq s$, we have
\[u'X_{\delta^c}'L_\delta^{-1} X_{\delta^c} u \geq \frac{n}{32} \|u\|_2^2,\]
as claimed.

\end{proof}

\section*{Acknowledgements}
I'm grateful to Joonha Park  for pointing out an error in an initial draft of the manuscript.


\bibliographystyle{ims}
\bibliography{biblio_graph,biblio_mcmc,biblio_optim}

\begin{thebibliography}{27}
\expandafter\ifx\csname natexlab\endcsname\relax\def\natexlab#1{#1}\fi
\expandafter\ifx\csname url\endcsname\relax
  \def\url#1{\texttt{#1}}\fi
\expandafter\ifx\csname urlprefix\endcsname\relax\def\urlprefix{URL }\fi

\bibitem[{{Atchade} and {Bhattacharyya}(2018)}]{AB:18}
\textsc{{Atchade}, Y.} and \textsc{{Bhattacharyya}, A.} (2018).
\newblock {An approach to large-scale Quasi-Bayesian inference with
  spike-and-slab priors}.
\newblock \textit{arXiv e-prints}  arXiv:1803.10282.

\bibitem[{Atchade(2017)}]{atchade:15b}
\textsc{Atchade, Y.~A.} (2017).
\newblock On the contraction properties of some high-dimensional
  quasi-posterior distributions.
\newblock \textit{Ann. Statist.} \textbf{45} 2248--2273.

\bibitem[{Castillo et~al.(2015)Castillo, Schmidt-Hieber and van~der
  Vaart}]{castillo:etal:14}
\textsc{Castillo, I.}, \textsc{Schmidt-Hieber, J.} and \textsc{van~der Vaart,
  A.} (2015).
\newblock Bayesian linear regression with sparse priors.
\newblock \textit{Ann. Statist.} \textbf{43} 1986--2018.

\bibitem[{Cattiaux and Guillin(2009)}]{cattiaux:guillin:09}
\textsc{Cattiaux, P.} and \textsc{Guillin, A.} (2009).
\newblock Trends to equilibrium in total variation distance.
\newblock \textit{Ann. Inst. H. Poincaré Probab. Statist.} \textbf{45}
  117--145.

\bibitem[{Diaconis and Stroock(1991)}]{diaconis:stroock:91}
\textsc{Diaconis, P.} and \textsc{Stroock, D.} (1991).
\newblock Geometric bounds for eigenvalues of markov chains.
\newblock \textit{The Annals of Applied Probability} \textbf{1} 36--61.

\bibitem[{{Dwivedi} et~al.(2018){Dwivedi}, {Chen}, {Wainwright} and
  {Yu}}]{dwivedi:etal:18}
\textsc{{Dwivedi}, R.}, \textsc{{Chen}, Y.}, \textsc{{Wainwright}, M.~J.} and
  \textsc{{Yu}, B.} (2018).
\newblock {Log-concave sampling: Metropolis-Hastings algorithms are fast!}
\newblock \textit{arXiv e-prints}  arXiv:1801.02309.

\bibitem[{Frieze et~al.(1994)Frieze, Kannan and Polson}]{friezeetal94}
\textsc{Frieze, A.}, \textsc{Kannan, R.} and \textsc{Polson, N.} (1994).
\newblock Sampling from log-concave distributions.
\newblock \textit{Ann. Appl. Probab.} \textbf{4} 812--837.

\bibitem[{{Ge} et~al.(2018){Ge}, {Lee} and {Risteski}}]{ge:etal:18}
\textsc{{Ge}, R.}, \textsc{{Lee}, H.} and \textsc{{Risteski}, A.} (2018).
\newblock {Simulated Tempering Langevin Monte Carlo II: An Improved Proof using
  Soft Markov Chain Decomposition}.
\newblock \textit{arXiv e-prints}  arXiv:1812.00793.

\bibitem[{George and McCulloch(1997)}]{george:mcculloch97}
\textsc{George, E.~I.} and \textsc{McCulloch, R.~E.} (1997).
\newblock Approaches to bayesian variable selection.
\newblock \textit{Statist. Sinica} \textbf{7} 339--373.

\bibitem[{Horn and Johnson(2012)}]{horn:johnson}
\textsc{Horn, R.~A.} and \textsc{Johnson, C.~R.} (2012).
\newblock \textit{Matrix Analysis}.
\newblock 2nd ed. Cambridge University Press, New York, NY, USA.

\bibitem[{Ishwaran and Rao(2005)}]{ishwaran:rao:2005}
\textsc{Ishwaran, H.} and \textsc{Rao, J.~S.} (2005).
\newblock Spike and slab variable selection: Frequentist and bayesian
  strategies.
\newblock \textit{Ann. Statist.} \textbf{33} 730--773.

\bibitem[{Laurent and Massart(2000)}]{laurent:massart:00}
\textsc{Laurent, B.} and \textsc{Massart, P.} (2000).
\newblock Adaptive estimation of a quadratic functional by model selection.
\newblock \textit{Ann. Statist.} \textbf{28} 1302--1338.

\bibitem[{Lawler and Sokal(1988)}]{lawler:sokal:88}
\textsc{Lawler, G.} and \textsc{Sokal, A.} (1988).
\newblock Bounds on the l2spectrum for markov chains and markov processes: A
  generalization of cheeger's inequality.
\newblock \textit{Transactions of the American Mathematical Society}
  \textbf{309} 557--580.

\bibitem[{Liggett(1991)}]{liggett:91}
\textsc{Liggett, T.~M.} (1991).
\newblock $l_2$ rates of convergence for attractive reversible nearest particle
  systems: The critical case.
\newblock \textit{Ann. Probab.} \textbf{19} 935--959.

\bibitem[{Lov{\'a}sz(1999)}]{lovasz99}
\textsc{Lov{\'a}sz, L.} (1999).
\newblock Hit-and-run mixes fast.
\newblock \textit{Math. Program.} \textbf{86} 443--461.

\bibitem[{Lovasz and Simonovits(1990)}]{lovacz:simonovits:90}
\textsc{Lovasz, L.} and \textsc{Simonovits, M.} (1990).
\newblock The mixing rate of markov chains, an isoperimetric inequality, and
  computing the volume.
\newblock In \textit{Proceedings of the 31st Annual Symposium on Foundations of
  Computer Science}. SFCS '90, IEEE Computer Society, Washington, DC, USA.

\bibitem[{Lov{\'a}sz and Simonovits(1993)}]{lovasz:simonovits93}
\textsc{Lov{\'a}sz, L.} and \textsc{Simonovits, M.} (1993).
\newblock Random walks in a convex body and an improved volume algorithm.
\newblock \textit{Random Structures Algorithms} \textbf{4} 359--412.

\bibitem[{Lov{\'a}sz and Vempala(2007)}]{lovasz:vempala07}
\textsc{Lov{\'a}sz, L.} and \textsc{Vempala, S.} (2007).
\newblock The geometry of logconcave functions and sampling algorithms.
\newblock \textit{Random Structures Algorithms} \textbf{30} 307--358.

\bibitem[{Madras and Randall(2002)}]{madras:randall:02}
\textsc{Madras, N.} and \textsc{Randall, D.} (2002).
\newblock Markov chain decomposition for convergence rate analysis.
\newblock \textit{Ann. Appl. Probab.} \textbf{12} 581--606.

\bibitem[{Mihail(1989)}]{mihail:89}
\textsc{Mihail, M.} (1989).
\newblock Conductance and convergence of markov chains-a combinatorial
  treatment of expanders.
\newblock In \textit{30th Annual Symposium on Foundations of Computer Science}.

\bibitem[{Montenegro and Tetali(2006)}]{montenegro:etal:06}
\textsc{Montenegro, R.} and \textsc{Tetali, P.} (2006).
\newblock Mathematical aspects of mixing times in markov chains.
\newblock \textit{Found. Trends Theor. Comput. Sci.} \textbf{1} 237--354.

\bibitem[{Narisetty and He(2014)}]{narisetti:he:14}
\textsc{Narisetty, N.} and \textsc{He, X.} (2014).
\newblock Bayesian variable selection with shrinking and diffusing priors.
\newblock \textit{Ann. Statist.} \textbf{42} 789--817.

\bibitem[{Raskutti et~al.(2010)Raskutti, Wainwright and Yu}]{raskutti:etal:10}
\textsc{Raskutti, G.}, \textsc{Wainwright, M.~J.} and \textsc{Yu, B.} (2010).
\newblock Restricted eigenvalue properties for correlated gaussian designs.
\newblock \textit{J. Mach. Learn. Res.} \textbf{11} 2241--2259.

\bibitem[{Rockner and Wang(2001)}]{rockner:wang:01}
\textsc{Rockner, M.} and \textsc{Wang, F.-Y.} (2001).
\newblock Weak poincaré inequalities and l2-convergence rates of markov
  semigroups.
\newblock \textit{Journal of Functional Analysis} \textbf{185} 564--603.

\bibitem[{Sinclair(1992)}]{sinclair:90}
\textsc{Sinclair, A.} (1992).
\newblock Improved bounds for mixing rates of markov chains and multicommodity
  flow.
\newblock \textit{Combinatorics, Probability and Computing} \textbf{1}
  351--370.

\bibitem[{Sinclair and Jerrum(1989)}]{sinclair:jerrum89}
\textsc{Sinclair, A.} and \textsc{Jerrum, M.} (1989).
\newblock Approximate counting, uniform generation and rapidly mixing {M}arkov
  chains.
\newblock \textit{Inform. and Comput.} \textbf{82} 93--133.

\bibitem[{Yang et~al.(2016)Yang, Wainwright and Jordan}]{yang:etal:15}
\textsc{Yang, Y.}, \textsc{Wainwright, M.~J.} and \textsc{Jordan, M.~I.}
  (2016).
\newblock On the computational complexity of high-dimensional bayesian variable
  selection.
\newblock \textit{Ann. Statist.} \textbf{44} 2497--2532.

\end{thebibliography}

\end{document}